\providecommand{\tabularnewline}{\\}
\begin{document}
\title{Verification of Query Completeness over Processes [Extended Version]}
\author{Simon Razniewski \and Marco Montali \and Werner Nutt}
\institute{Free University of Bozen-Bolzano\\
Dominikanerplatz 3\\
39100 Bozen-Bolzano\\
\{razniewski,montali,nutt\}$@$inf.unibz.it}
\maketitle
\begin{abstract}
Data completeness is an essential aspect of data quality, and has in
turn a huge impact on the effective management of companies.
For example, statistics are computed and audits are conducted in
companies by implicitly placing the strong assumption that the
analysed data are complete. 
In this work, we are interested in studying the problem of
completeness of data produced by business processes, to the aim of
automatically assessing whether a given database query can be answered
with complete information in a certain state of the process. 
We formalize so-called \emph{quality-aware processes} that create data
in the real world and store it in the company's information system possibly at a
later point.
%
%
%
We then show how
one can check the completeness of database queries in a certain
state of the process or after the execution of a sequence of actions, by leveraging on query
containment, a well-studied problem in database theory.
\end{abstract}
\global\long\def\newmacroname{}

\global\long\def\fc{f_{\C}}

\global\long\def\uk{\mathrm{uk}}
 %

\global\long\def\na{\mathrm{N/A}}
 %

\global\long\def\const{\mathrm{const}}

\global\long\def\codd{\mathrm{Codd}}

\global\long\def\sql{\mathrm{SQL}}

\global\long\def\can{\mathrm{can}}

\global\long\def\sred{\mathrm{sred}}

\global\long\def\complsred#1{\mathit{Compl^{\sred}}(#1)}

\global\long\def\complb#1{\mathit{Compl^{b}}(#1)}

\global\long\def\compls#1{\mathit{Compl^{s}}(#1)}

\global\long\def\complstar#1#2{\mathit{Compl}^{#2}(#1)}

\global\long\def\smixed{\mathrm{s+}}

\newcommand{\Null}{\textsl{null}\xspace}
\newcommand{\Nulls}{\textsl{null}s\xspace}
\newcommand{\pdb}{{\mathcal D\xspace}\xspace}
\newcommand{\D}{{\pdb}}
\newcommand{\id}[1]{#1^i}
\newcommand{\av}[1]{#1^a}

\newcommand{\epos}[2]{\textit{EPos}(#1,#2)}   

\newcommand{\etal}{et al.}
\newcommand{\satisfies}{\models}




\newcommand{\RQ}{\ensuremath{\L_{\rm{RQ}}}}
\newcommand{\CQ}{\ensuremath{\L_{\rm{CQ}}}}
\newcommand{\LRQ}{\ensuremath{\L_{\rm{LRQ}}}}
\newcommand{\LCQ}{\ensuremath{\L_{\rm{LCQ}}}}

\renewcommand{\L}{{\mathcal L}}							

\newcommand{\compl}[1]{\textit{Compl}($#1$)}
\newcommand{\complpi}[1]{\textit{Compl}_{\pi}\ensuremath{(#1)}}


\newcommand{\localComp}{table completeness\xspace}
\newcommand{\LocalComp}{Table completeness\xspace}
\newcommand{\LC}{TC\xspace}

\newcommand{\LCQC}{\ensuremath{\mathsf{TC\text{-}QC}}}
\newcommand{\LCLC}{\ensuremath{\mathsf{TC\text{-}TC}}}
\newcommand{\QCQC}{\ensuremath{\mathsf{QC\text{-}QC}}}

\newcommand{\PTIME}{\ensuremath{\mathsf{PTIME}}}
\newcommand{\CONP}{\ensuremath{\mathsf{coNP}}}
\newcommand{\NP}{\ensuremath{\mathsf{NP}}}

\newcommand{\Cont}{\ensuremath{\mathsf{Cont}}}
\newcommand{\ContU}{\ensuremath{\mathsf{ContU}}}


\newcommand{\eset}{\emptyset}
\newcommand\bigset[1]{ \Bigl\{ #1 \Bigr\} }   
\newcommand\bigmid{\ \Big|\ }

\newcommand{\bag}[1]{\{\hspace{-0.13em}|\, #1 \,|\hspace{-0.13em}\}}   

\newcommand{\incl}{\subseteq}           
\newcommand{\incls}{\supseteq}          

\newcommand{\col}{\colon}

\newcommand{\angles}[1]{\langle#1\rangle}       

\newcommand{\dd}[2]{#1_1,\ldots,#1_{#2}}      

\newcommand{\quotes}[1]{\lq\lq#1\rq\rq}         
\newcommand{\wrt}{w.r.t.\ }                        
\newcommand{\WLOG}{wlog\xspace}                     

\newcommand{\Sum}{\ensuremath{{\sf sum}}}
\newcommand{\Count}{\ensuremath{{\sf count}}}
\newcommand{\Max}{\ensuremath{{\sf max}}}
\newcommand{\Min}{\ensuremath{{\sf min}}}

\newcommand\core[1]{\mathring{#1}}


\newcommand{\qif}{\,{:}{-}\,}
\newcommand{\union}{\cup}
\newcommand{\dom}{\mathit{dom}}
\newcommand{\var}{\mathit{Var}}
\newcommand{\lit}[1]{L_{#1}}
\newcommand{\cpred}[1]{V_{#1}}
\newcommand{\concond}{G}
\newcommand{\da}{\av D}
\newcommand{\di}{\id D}
\newcommand{\lcstmt}{C}
\newcommand{\cplstmt}[2]{#1 {\, \dot{\subseteq} \, } #2}
\newcommand{\Compl}[1]{\mathit{Compl}{(}#1{)}}
\newcommand{\Complb}[1]{\mathit{Compl}^b($#1$)}
\newcommand{\Compls}[1]{\mathit{Compl}^s($#1$)}
\newcommand{\Complsred}[1]{\mathit{Compl}^{sred}($#1$)}

\newcommand{\qcompl}[1]{\mathit{Compl}{(}#1{)}}
\newcommand{\cplset}{{\mathcal C}}
\newcommand{\C}{{\mathcal C}}                       
\newcommand{\Q}{{\mathcal Q}}                       
\newcommand{\F}{{\mathcal F}}                       
\renewcommand{\S}{{\mathcal S}}                       
\newcommand{\cplhat}{\hat C}
\newcommand{\chk}{\check}
\renewcommand{\implies}{\rightarrow}
\renewcommand{\And}{\wedge}
\newcommand{\Or}{\vee}
\newcommand{\findom}[3]{\mathit{Dom}(#1,#2,#3)} 

\newcommand{\true}{\textit{true}}
\newcommand{\false}{\textit{false}}
\newcommand{\determines}{\rightarrow \! \! \! \! \! \rightarrow} 
\newcommand{\dotequiv}{\ \dot{\equiv} \ } 
\newcommand{\Wlog}{W.l.o.g.\ }
\newcommand{\ie}{i.\ e.\ ,}
\newcommand{\piptwo}{\Pi^P_2}
\newcommand{\sigptwo}{\Sigma^P_2}

\newcommand{\queryset}{{\mathcal Q}} 
\newcommand{\schemaconstraintset}{{\mathcal F}} 
\newcommand{\scset}{\schemaconstraintset} 
\newcommand{\qset}{\queryset}
\newcommand{\val}{\upsilon}
\newcommand{\viewset}{{\mathcal V}}

\newcommand{\tpl}[1]{\bar{#1}}				
\newcommand{\tplsub}[2]{{\bar{#1}}_{#2}}				
\newcommand{\query}[2]{#1 \qif #2} 		
\newcommand{\aufz}[2]{#1_1,\ldots,#1_{#2}}    
\newcommand{\set}[1]{\{\,#1\,\}}
\newcommand{\Onlyif}{\lq\lq$\Rightarrow$\rq\rq\ \ }   
\newcommand{\If}{\lq\lq$\Leftarrow$\rq\rq\ \ }        
\newcommand{\eat}[1]{}
\newcommand{\modelsms}{\models \! \! ^{\mathit{M \! \! S}}}		  

\newcommand{\joininquotes}{\mbox{ ``$\bowtie$"}}      


\global\long\def\pdb{\mathcal{D}}

\global\long\def\domby{\preceq}

\global\long\def\tc#1#2#3{\mathit{Compl}(#1;\,#2;\,#3)}

\global\long\def\query#1#2{#1\qif#2}




\newcommand{\student}{\texttt{student}\xspace}
\newcommand{\sid}{\texttt{sid}\xspace}
\newcommand{\name}{\texttt{name}\xspace}
\newcommand{\level}{\texttt{level}\xspace}
\newcommand{\code}{\texttt{code}\xspace}
\newcommand{\hometown}{\texttt{hometown}\xspace}
\newcommand{\class}{\texttt{class}\xspace}
\newcommand{\formTeacher}{\texttt{formTeacher}\xspace}
\newcommand{\viceFormTeacher}{\texttt{viceFormTeacher}\xspace}
\newcommand{\profile}{\texttt{profile}\xspace}

\newcommand{\customer}{\texttt{customer}}
\newcommand{\street}{\texttt{street}}
\newcommand{\city}{\texttt{city}}
\newcommand{\supplier}{\texttt{supplier}}
\newcommand{\partner}{\texttt{partner}}
\newcommand{\longTerm}{\texttt{long\_term}}
\newcommand{\contact}{\texttt{contact}}
\newcommand{\iD}{\texttt{id}}

\global\long\def\pdb{\mathcal{D}}

\global\long\def\domby{\preceq}

\global\long\def\dominates{\succeq}

\global\long\def\tc#1#2#3{\mathit{Compl}(#1;#2;#3)}

\global\long\def\query#1#2{#1\qif#2}

\global\long\def\compl#1{\mathrm{Compl}(#1)}

\global\long\def\tpl#1{\bar{#1}}

\global\long\def\ctle{\mathbf{E}}

\global\long\def\ctlf{\mathbf{F}}

\global\long\def\ctla{\mathbf{A}}

\global\long\def\ctlg{\mathbf{G}}

\global\long\def\muexists{<\negthickspace\negthinspace-\negthickspace\negthinspace>}

\global\long\def\muall{[\negthinspace-\negthinspace]}

\textit{}\global\long\def\statedescriptor{\mathrm{stateDescriptor}}

\global\long\def\mulp{\mu\mathcal{L_{\mathit{P}}}}

\global\long\def\mula{\mu\mathcal{L_{\mathit{A}}}}

\global\long\def\live{\mathrm{Live}}

\global\long\def\chase{\mathrm{chase}}

\global\long\def\b{\mathrm{bag}}

\global\long\def\s{\mathrm{set}}

\global\long\def\D{\mathcal{D}}

\global\long\def\S{\mathcal{S}}

\global\long\def\P{\mathcal{P}}

\global\long\def\formcurcompl{\gamma^{\mathrm{cur}}}

\global\long\def\formdefcompl{\gamma^{\mathrm{def}}}

\global\long\def\id#1{#1^{\mathit{rw}}}

\global\long\def\av#1{#1^{\mathit{is}}}

\global\long\def\true{\mathrm{true}}

\global\long\def\formdefcomplcl{\formdefcompl_{C,L}}

\global\long\def\sarrown{\S^{\rightarrow n}}

\global\long\def\tcom#1#2{\mathrm{Compl}(#1;#2)}

\global\long\def\tcor#1#2{\mathrm{Corr}(#1;#2)}

\global\long\def\formcurcor{\epsilon^{\mathrm{cur}}}

\global\long\def\formdefcor{\epsilon^{\mathrm{def}}}

\global\long\def\ag{\mathbf{\mathbf{\boldsymbol{\mathbf{AG}}}}}

\global\long\def\ef{\mathbf{\mathbf{\boldsymbol{\mathbf{EF}}}}}

\global\long\def\tphi{T^{\phi}}

\global\long\def\intrcur{\mathrm{Intr}^{\mathrm{cur}}}

\global\long\def\intrdef{\mathrm{Intr}^{\mathrm{def}}}

\global\long\def\intr#1{\mathrm{Intr}(#1)}

\global\long\def\piptwo{\Pi_{2}^{P}}

\section{Introduction}

Data completeness is an important aspect of data quality. When data
is used in decision-making, it is important that the data is of good
quality, and in particular that it is complete. This is particularly
true in an enterprise setting. On the one hand, strategic decisions
are taken inside a company by relying on statistics and
business indicators such as KPIs. Obviously, this information is
useful only if it is reliable, and reliability, in turn, is strictly
related to quality and, more specifically, to completeness.  

Consider for example the school information system of the autonomous
province of Bolzano in Italy, which triggered the research
included in this paper. Such an information system stores data
about schools, enrolments, students and teachers. When statistics are
computed for the enrolments in a given school, e.g., to decide the amount of teachers
needed for the following academic year, it is of utmost importance
that the involved data are complete, i.e., that the required
information stored in the information system is aligned with reality. 

Completeness of data is a key issue also in the context of
auditing. When a company is
evaluated to check whether its way of conducting business is in
accordance to the law and to audit assurance standards, part of the
external audit is dedicated to the analysis of the actual data. If
such data are incomplete w.r.t.~the queries issued during the audit,
then the obtained answers do not properly reflect the company's behaviour.

There has been plenty of work on fixing data quality issues, especially
for fixing incorrect data and for detecting duplicates \cite{hernandez:1998:data-cleansing,bilenko:2003:duplicate-detection}. However, some
data quality issues cannot be automatically fixed. This holds
in particular for incomplete data, as missing data cannot be corrected
inside a system, unless additional activities are introduced to acquire them.
In all these situations, it is then a mandatory requirement to (at least) detect
data quality issues, enabling informed decisions drawn with knowledge
about which data are complete and which not. 

The key question therefore is
how it is possible to obtain this completeness information.
There has been previous work on the assessment of data completeness
\cite{Razniewski:Nutt-Compl:of:Queries-VLDB11}, however this approach
left the question where completeness information come from largely
open. In this work, we argue that, in the common situation where the manipulation of data
inside the information system is driven by business processes, we
can leverage on such processes to infer information about data
completeness, provided that we suitably annotate the involved
activities with explicit information about the way they manipulate data.

A common source of data incompleteness in business processes is constituted by delays between real-world events
and their recording in an information system. This holds in particular
for scenarios where processes are carried out partially without
support of the information system. E.g., many legal events are considered valid as soon as they are signed on a sheet
of paper, but their recording in the information system could happen
much later in time. Consider again the example of the school
information system, in particular the enrolment of pupils in schools.
Parents enroll their children at the individual schools, and the enrolment
is valid as soon as both the parents and the school director sign
the enrolment form. However, the school secretary may record the information
from the sheets only later in the local database of the school, and even later submit
all the enrolment information to the central school administration,
which needs it to plan the assignment of teachers to
schools, and other management tasks.

In the BPM context, there have been attempts to model data quality
issues, like in \cite{bpmn-data-inaccuracy-modeling-weak,bpmn-data-quality:cappiello:caballero-weak,bagchi:2006:data-quality-and-business-process-modeling}.
However, these approaches mainly discussed general methodologies for
modelling data quality requirements in BPMN, but did not provide methods
to asses their fulfilment.
In this paper, we claim that process formalizations are an essential
source for learning about data completeness and show how data completeness
can be verified. In particular, our contributions are (1) to introduce
the idea of extracting information about data completeness from
processes manipulating the data, (2) to formalize processes that can
both interact with the real-world and record information about the
real-world in an information system, and (3) to show how completeness
can be verified over such processes, both at design and at execution time.

Our approach leverages on two assumptions related to how the data
manipulation and the process control-flow are captured. From the data
point of view, we leverage on annotations that suitably mediate
between expressiveness and tractability. More specifically, we rely on
annotations modeling that new information of a given type is acquired
in the real world, or that some information present in the real world
is stored into the information system. We do not explicitly consider
the evolution of specific values for the data, as incorporating full-fledged
data without any restriction would immediately make our problem
undecidable, being simple reachability queries undecidable in such a
rich setting \cite{DDHV11,BCDDF11,BCDDM13}. From the control-flow point of
view, we are completely orthogonal to process specification
languages. In particular, we design our data completeness algorithms
over (labeled) transition systems, a well-established mathematical structure to
represent the execution traces that can by produced according to the
control-flow dependencies of the
(business) process model of interest. Consequently, our approach can in principle
be applied to any process modeling language, with the proviso of annotating
the involved activities. We are in particular interested in providing
automated reasoning facilities to answer whether a given query can be
answered with complete information given a target state or a
sequence of activities.


The rest of this paper is divided as follows. In Section 2, we discuss
the scenario of the school enrolment data in the province of Bozen/Bolzano
in detail. In Section 3, we discuss our formal approach, introducing
quality-aware transition systems, process activity annotations used to
capture the semantics of activities that interact
with the real world and with an information system, and 
properties of query completeness over such systems. In Section 4,
we discuss how query completeness can be verified over such systems
at design time and at runtime, how query completeness can be refined and what the complexity of deciding query completeness is.

\section{Example Scenario}

Consider the example of the enrollment to schools in the province
of Bolzano. Parents can submit enrollment requests for their child
to any school they want until the 1st of March. Schools then decide
which pupils to accept, and parents have to choose one of the schools
in which their child is accepted. Since in May the school administration
wants to start planning the allocation of teachers to schools and take
further decisions (such as the opening and closing of school branches
and schools) they require the schools to process the enrollments
and to enter them in the central school information system before the
15th of April. 

A particular feature of this process is that it is partly carried out
with pen and paper, and partly in front of a computer, interacting
with an underlying school information system. Consequently, 
the information system does not often contain all the information
that hold in the real world, and is therefore incomplete. E.g.,
while an
enrollment is legally already valid when the enrollment sheet is signed,
this information is visible in the information system only when the
secretary enters it into a computerised form.

A BPMN diagram sketching the main phases of this process is shown in Fig.~\ref{figure:enrolment-BPMN-advanced},
while a simple UML diagram of (a fragment of) the school domain is
reported in Fig.~\ref{figure:ER-diagram-school-world}. 
These diagrams abstractly summarise the school domain from the point
of view of the central administration. Concretely, each school 
implements a specific, local version of the enrolment process, relying
on its own domain conceptual model. The data collected on a
per-school basis are then transferred into a central information
system managed by the central administration, which refines the
conceptual model of Fig.~\ref{figure:ER-diagram-school-world}. In the
following, we will assume that such an information system represents
information about children and the class they belong to by means of a
$pupil(pname,class,sname)$ relation, where $pname$ is the name of an
enrolled child, $class$ is the class to which the pupil belongs, and
$sname$ is the name of the corresponding school. 
\begin{figure}[t]
\centering
\includegraphics[scale=1]{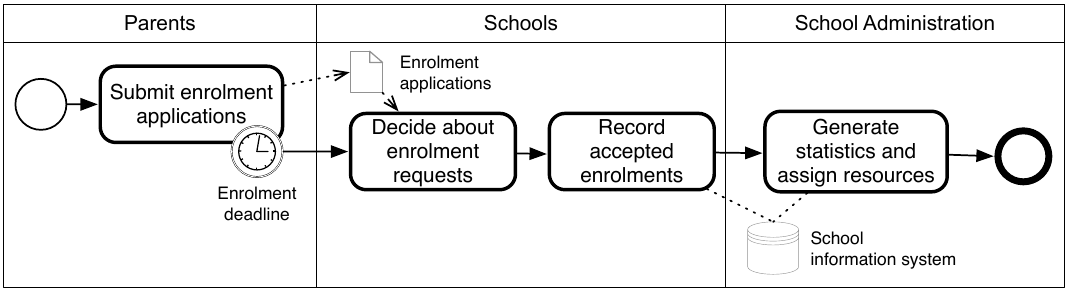}\caption{BPMN
  diagram of the main phases of the school enrollment process}
\label{figure:enrolment-BPMN-advanced}
\end{figure}

\begin{figure}[t]
\centering
\includegraphics[scale=1]{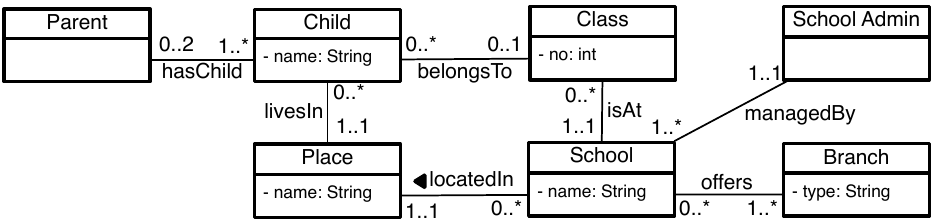}
\caption{UML diagram capturing a fragment of the school domain}
\label{figure:ER-diagram-school-world}
\end{figure}

When using the statistics about the enrollments as compiled in the
beginning of May, the school administration is highly interested in
having correct statistical information, which in turn requires that the underlying data about
the enrollments must be complete. Since the data is generated during
the enrollment process, this gives rise to several questions about
such a process.
The first question is whether the process is generally designed correctly,
that is, whether the enrollments present in the information system are really
complete at the time they publish their statistics, or whether it
is still possible to submit valid enrollments by the time the statistics
are published. We call this problem the \emph{design-time verification}.

A second question is to find whether the number of enrollments
in a certain school branch is already complete before the 15th of
April, that is, when the schools are still allowed to submit enrolments
(i.e., when there are school that still have not completed the second activity in the school lane
of Fig.~\ref{figure:enrolment-BPMN-advanced}), which could be the
case when some schools submitted all their enrollments but others
did not. In specific cases the number can be complete already, when
the schools that submitted their data are all the schools that offer
the branch. We call this problem the \emph{run-time verification}.

A third question is to learn on a finer-grained level about the completeness
of statistics, when they are not generally complete. When a statistic
consists not of a single number but of a set of values (e.g. enrollments
per school), it is interesting to know for which schools the number
is already complete. We call this the \emph{dimension
analysis}.

\section{Formalization}

We want to formalize processes such as the one in Fig.~\ref{figure:enrolment-BPMN-advanced},
which operate both over data in the real-world (pen\&paper) and record
information about the real world in an information system. We therefore
first introduce real-world databases and information system databases, and show then how transition systems, which represent possible process executions, can be annotated with effects for interacting with the real-world or the information system database.

\subsection{Real-world Databases and Information System Databases}

We assume an ordered, dense set of constants $\dom$ and a fixed set $\Sigma$ of relations. A database instance
is a finite set of facts in $\Sigma$ over $\dom$. As there exists both the
real world and the information system, in the following we model this
with two databases: $\id D$ called the real-world database, which
describes the information that holds in the real world, and $\av D$,
called the information system database, which captures the information
that is stored in the information system. We assume that the stored
information is always a subset of the real-world information. Thus,
processes actually operate over pairs $(\id D,\av D)$ of real-world
database and information system database. In the following, we will
focus on processes that create data in the real world and copy parts
of the data into the information system, possibly delayed. \global\long\def\John{\mathit{John}}
\global\long\def\Mary{\mathit{Mary}}
\global\long\def\pupil{\mathit{pupil}}
\global\long\def\req{\mathit{request}}
\global\long\def\enr{\mathit{enrolled}}
\global\long\def\test{\mathit{test}}
\global\long\def\Hoferschool{\mathit{HoferSchool}}
\global\long\def\DaVincischool{\mathit{DaVinciSchool}}
\global\long\def\livesin{\mathit{livesIn}}
\global\long\def\Bolzano{\mathit{Bolzano}}
\global\long\def\Bob{\mathit{Bob}}
\global\long\def\Alice{\mathit{Alice}}
\global\long\def\Merano{\mathit{Merano}}
\global\long\def\coNP{\mathrm{coNP}}
\global\long\def\NP{\mathrm{NP}}

\begin{example}
Consider that in the real world, there are the two pupils John and
Mary enrolled in the classes 2 and 4 at the Hofer School, while the
school has so far only processed the enrollment of John in their IT
system. Additionally it holds in the real world that John and Alice live in Bolzano
and Bob lives in the city of Merano. The real-world database $\id D$
would then be $\{\pupil(\John,2,\Hoferschool),\pupil(\Mary,4,\Hoferschool)$,\linebreak{}$\livesin(\John,\Bolzano)$,
 $\livesin(\Bob,\Merano),\livesin(Alice,\Bolzano)\}$ while the information
system database would be $\{\pupil(\John,2,\Hoferschool)\}$.\label{example:incomplete-db}
\end{example}
Where it is not clear from the context, we annotate atoms with the database
they belong to, so, e.g., $\av{\pupil}(\John,4,\Hoferschool)$
 means that this fact is stored in the information system database.

\subsection{Query Completeness}

For planning purposes, the school administration is interested 
in figures such as the number of pupils per class, school, profile,
etc. Such figures can be extracted from relational databases via SQL queries using the 
COUNT keyword. In an SQL database with a table $\mathtt{pupil(name,class,school)}$,
a query asking for the number of students per school would be written
as:
\\[-3.2ex]
\begin{align} 
\begin{split} 
& \mbox{SELECT school, COUNT(*) as pupils\_nr}\\[-.8ex]
& \mbox{FROM pupil}\\[-.8ex]
& \mbox{GROUP BY school.}\\[-4ex]
\label{equation-query1}
\end{split} 
\end{align} In database theory, conjunctive queries were introduced to formalize
SQL queries. A \emph{conjunctive query} $Q$ is an expression of the form
$\query{Q(\tpl x)}{A_{1},\ldots,A_{n},M}$, where $\tpl x$ are called
the distinguished variables in the head of the query, $A_{1}$
to $A_{n}$ the atoms in the body of the query, and $M$ is a set of built-in comparisons \cite{foundations_of_dbs}. We denote the set of all variables that appear in a query $Q$ by $\var(Q)$. Common subclasses of conjunctive queries are linear conjunctive queries, that is, they do not contain a relational symbol twice, and relational conjunctive queries, that is, queries that do not use comparison predicates.
Conjunctive queries allow to formalize all single-block SQL queries,
i.e., queries of the form ``SELECT $\ldots$ FROM $\ldots$ WHERE $\ldots$''.
As a conjunctive query, the SQL query (\ref{equation-query1}) above
would be written as:
\begin{equation}
\query{Q_{p/s}(\mathit{schoolname},\mathit{count(name}))}{\pupil(\mathit{name,class,schoolname})}
\end{equation}

In the following, we assume that all queries are conjunctive queries. 
We now formalize query completeness over a pair of a
real-world database and an information system database. Intuitively,
if query completeness can be guaranteed, then this means that the query over the generally incomplete information system database
gives the same answer as it would give w.r.t.~the information that holds
in the real world. Query completeness is the key property that we
are interested in verifying.

A pair of databases $(\id D,\av D)$ satisfies\emph{ query completeness}
of a query $Q$, if $Q(\id D)=Q(\av D)$ holds. We then write $(\id D,\av D)\models\compl Q$.
\begin{example}
Consider the pair of databases $(\id D,\av D)$ from Example \ref{example:incomplete-db}
and the query $Q_{p/s}$ from above (2).
Then, $\compl{Q_{p/s}}$ does not hold over $(\id D,\av D)$ because
$Q(\id D)=\{(\Hoferschool,2)\}$ but 
$Q(\av D)=\{(\Hoferschool,1)\}$. 
A query for pupils in class 2 only, $\query{Q_{class2}(n)}{\pupil(n,2,s)}$, 
would be complete, because $Q(\id D)=Q(\av D)=\{\John\}$.
\end{example}

\subsection{Real-world Effects and Copy Effects}
\label{sec:effects}
\global\long\def\actrw{\mathrm{RA}}
\global\long\def\actdb{\mathrm{CA}}
\global\long\def\tgd#1#2{#1\rightarrow#2}
\global\long\def\re{\mathit{re}}
\global\long\def\ce{\mathit{ce}}
\global\long\def\CE{\mathrm{CE}}
\global\long\def\start{start}
\global\long\def\resident{\mathit{resident}}
\global\long\def\copyy{\mathit{copy}}
\global\long\def\Davinci{\mathit{DaVinci}}

We want to formalize the real-world effect of an enrollment action at the Hofer School,
where in principle, every pupil that has submitted an enrolment request before, is allowed to enroll in
the real world. We can formalize this using the following implication:
$\id{\pupil}(n,c,\Hoferschool)\leftsquigarrow\id{\req}(n,\Hoferschool)$, which should mean
that whenever someone is a pupil at the Hofer school now, he has submitted an enrolment request before.
Also, we want to formalize copy effects, for example where all pupils
in classes greater than 3 are stored in the database. This can be written
with the following implication: $\id{\pupil}(n,c,s),c>3\rightarrow\av{\pupil}(n,c,s)$,
which means that whenever someone is a pupil in a class with level greater
than three in the real world, then this fact is also stored in the information
system.

For annotating processes with information about data creation and
manipulation in the real world $\id D$ and in the information system
$\av D$, we use real-world effects and copy effects as annotations.
While their syntax is the same, their semantics is different. Formally,
a \emph{real-world effect} $r$ or a \emph{copy effect} $c$ is a
tuple $(R(\tpl x,\tpl y),G(\tpl x,\tpl z))$, where $R$ is an atom,
$G$ is a set of atoms and built-in comparisons and $\tpl x$, $\tpl y$ and $\tpl z$ are
sets of distinct variables. We call $G$ the \emph{guard} of the effect.
The effects $r$ and $c$ can be written as follows:
\begin{eqnarray*}
&&r:\ \id R(\tpl x,\tpl y)\leftsquigarrow\exists\tpl z \! : \ \id G(\tpl x,\tpl z)\\
&&c:\ \id R(\tpl x,\tpl y),\id G(\tpl x,\tpl z)\rightarrow\av R(\tpl x,\tpl y)
\end{eqnarray*}

Real-world effects can have variables $\tpl y$ on the left side that do not occur in the condition. These variables are not restricted and thus allow to introduce new values.

A pair of real-world databases $(\id{D_{1}},\id{D_{2}})$ \emph{conforms}
to a real-world effect $\id R(\tpl x,\tpl y)\leftsquigarrow\exists\tpl z:\ \id G(\tpl x,\tpl z)$,
if for all facts $\id R(\tpl c_{1},\tpl c_{2})$ that are in $\id{D_{2}}$
but not in $\id{D_{1}}$ it holds that there exists a tuple of  constants $\tpl c_{3}$
such that the guard $\id G(\tpl c_{1},\tpl c_{3})$ is in $\id{D_{1}}$.
The pair of databases conforms to a set of real-world effects, if
each fact in $\id{D_{2}}\setminus\id{D_{1}}$ conforms to at least
one real-word effect. 

If for a real-world effect there does not exist any pair of databases $(D_1,D_2)$ with $D_2 \setminus D_1 \neq \emptyset$ that conforms to the effect, the effect is  called \emph{useless}. In the following we only consider real-world effects that are not useless.

The function $\copyy_{c}$ for a copy effect $c=\id R(\tpl x,\tpl y),\id G(\tpl x,\tpl z)\rightarrow\av R(\tpl x,\tpl y)$
over a real-world database $\id D$ returns the corresponding R-facts
for all the tuples that are in the answer of the query $\query{P_{c}(\tpl x,\tpl y)}{\id R(\tpl x,\tpl y),\id G(\tpl x,\tpl z)}$
over $\id D$. For a set of copy effects $\CE$, the function $\mathrm{copy_{\CE}}$
is defined by taking the union of the results of the individual copy
functions.
\begin{example}
Consider a real-world effect $r$
that allows to introduce persons living in Merano as pupils in classes higher than 3 in the real world, that is, 
$r=\id{\pupil}(n,c,s)\leftsquigarrow c>3,\livesin(n,\Merano)$
and a pair of real-world databases using the database $\id D$ from
Example \ref{example:incomplete-db} as 
 $(\id D,\id D\cup\{\id{\pupil}(\Bob,4,\Hoferschool)\}$. Then this
pair conforms to the real-world effect $r$, because the guard of
the only new fact $\id{\pupil}(\Bob,4,\Hoferschool)$ evaluates to
true: Bob lives in Merano and his class level is greater than 3. The
pair $(\id D,\id D\cup\{\id{\pupil}(\Alice,1,\Hoferschool)\}$ does
not conform to $r$, because Alice does not live in Merano, and also
because the class level is not greater than 3.

For the copy effect $c=\id{\pupil}(n,c,s),c>3\rightarrow\av{\pupil}(n,c,s)$,
which copies all pupils in classes greater equal 3, its output over
the real-world database in Example \ref{example:incomplete-db} would
be $\{\av{\pupil}(\Mary,4,\Hoferschool)\}$.

\end{example}

\subsection{Quality-Aware Transition Systems}
\global\long\def\qats{\bar{T}}

To capture the execution semantics of \emph{quality-aware processes},
we resort to (suitably annotated) labelled transition systems, a common way to describe the
semantics of concurrent processes by interleaving \cite{BaKG08}. This
makes our approach applicable for virtually every business process modelling
language equipped with a formal underlying transition semantics (such
as Petri nets or, directly, transition systems).

Formally, a \emph{(labelled) transition system} $T$ is a tuple $T=(S,s_{0},A,E)$,
where $S$ is a set of states, $s_{0}\in S$ is the initial state,
$A$ is a set of names of actions and $E\subseteq S\times A\times S$
is a set of edges labelled by actions from $A$. In the following,
we will annotate the actions of the transition systems with effects that describe
interaction with the real-world and the information system.
In particular, we introduce \emph{quality-aware transition systems}
(QATS) to capture the execution semantics of processes that change data both in the real world and
in the information system database. 

Formally, a \emph{quality-aware
transition system} $\qats$ is a tuple $\qats=(T,\re,\ce)$, where
$T$ is a transition system and $\re$ and $\ce$ are functions from
$A$ into the sets of all real-world effects and copy effects, which in
turn obey to the syntax and semantics defined in Sec.~\ref{sec:effects}. Note that transition systems and hence also QATS may contain cycles.

\begin{example}
Let us consider two specific schools, the Hofer School and the
Da Vinci School, and a (simplified version) of
their enrolment process, depicted in BPMN in
Fig.~\ref{fig:school-processes} (in parenthesis, we introduce compact
names for the activities, which will be used throughout the
example). As we will see, while the two processes are independent from
each other from the control-flow point of view (i.e., they run in
parallel), they eventually write information into the same table of
the central information system.

Let us first consider the Hofer School. In the first step, the requests are
processed with pen and paper, deciding which requests are accepted
and, for those, adding the signature of the school
director and finalising other bureaucratic issues. By using relation
$\id \req(n,\Hoferschool)$ to model the fact that a child named $n$ requests to be
enrolled at Hofer, and  $\id \pupil(n,1,\Hoferschool)$ to model that she is actually enrolled, the
activity \textsf{pH} is a real-world activity that can be annotated
with the real-world effect
$\id\pupil(n,1,\Hoferschool)\leftsquigarrow\id\req(n,\Hoferschool)$. In the second step, the
information about enrolled pupils is transferred to the central
information system by copying all real-world enrolments of the Hofer school. More specifically, the activity  \textsf{rH} can
be annotated with the copy effect 
$\id{\pupil}(n,1,\Hoferschool)\rightarrow\av{\pupil}(n,1,\Hoferschool)$.

Let us now focus on the Da Vinci School. Depending on the amount of
incoming requests, the school decides whether to directly process the
enrolments, or to do an entrance test for obtaining a ranking. In the
first case (activity \textsf{pD}), the activity mirrors that of the
Hofer school, and is annotated with the real-world effect
$\id\pupil(n,1,\Davinci)\leftsquigarrow\id\req(n,\Davinci)$. As for the test, the
activity \textsf{tD}  can be annotated with a real-world effect that
makes it possible to enrol only those children who passed the test:
$\id\pupil(n,1,\Davinci)\leftsquigarrow\id\req(n,\Davinci),\id{\test}(n,mark),mark{\ \geq\
}6$. Finally, the process terminates by properly transferring the
information about enrolments to the central administration, exactly as
done for the Hofer school. In particular, the activity \textsf{rD}  is
annotated with the copy effect
$\id\pupil(n,1,\Davinci)\rightarrow\av{\pupil}(n,1,\Davinci)$. Notice
that this effect feeds the same $\pupil$ relation of the central
information systems that is used by \textsf{rH}, but with a different value for the third column
(i.e., the school name).

Fig.~\ref{fig:school-qats} shows the QATS formalizing the execution
semantics of the parallel composition of the two processes (where
activities are properly annotated with the previously discussed
effects). Circles drawn in orange with solid line represent execution states where the information about pupils enrolled at the Hofer school is
complete. Circles in blue with double stroke represent execution states where completeness holds for pupils enrolled at the Da Vinci
school. At the final, sink state information about the
enrolled pupils is complete for both schools.

\begin{figure}[t]
\centering
\subfigure[]{\label{fig:school-processes}\includegraphics[width=.48\textwidth]{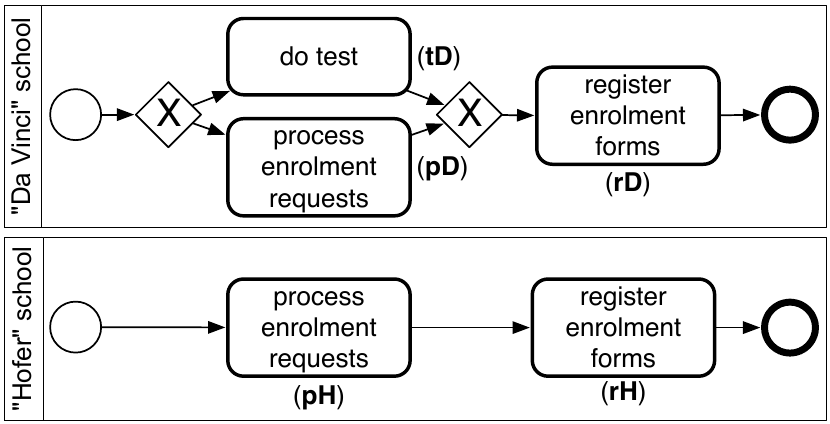}}
\subfigure[]{\label{fig:school-qats}\includegraphics[width=.48\textwidth]{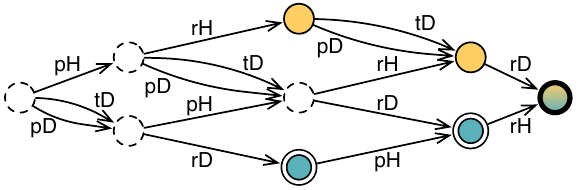}}
\caption{BPMN enrolment process of two schools, and the corresponding QATS}
\label{schools-bpmn-and-qats}
\end{figure}



\end{example}

\subsection{Paths and Action Sequences in QATSs}

Let $\qats=(T,\re,\ce)$ be a QATS. A \emph{path} $\pi$ in $\qats$
is a sequence $t_{1},\ldots,t_{n}$ of transitions such that $t_{i}=(s_{i-1},a_{i},s_{i})$
for all $i=1\ldots n$. An\emph{ action sequence} $\alpha$ is a sequence
$a_{1},\ldots,a_{m}$ of action names. Each path $\pi=t_{1},\ldots,t_{n}$
has also a \emph{corresponding} \emph{action sequence} $\alpha_{\pi}$
defined as $a_{1},\ldots,a_{n}$ . For a state $s$, the set \global\long\def\aseq{\mathit{Aseq}}
 $\aseq(s)$ is the set of the action sequences of all paths that
end in $s$.

Next we consider the semantics of action sequences. 
A \emph{development} of an action sequence $\alpha=a_{1},\ldots,a_{n}$
is a sequence $\id{D_{0}},\ldots,\id{D_{n}}$ of real-world databases
such that each pair $\id{(D_{j}},\id{D_{j+1}})$ conforms to the effects
$\re(\alpha_{j+1})$. Note that $\id{D_{0}}$ can be arbitrary. For each
development $\id{D_{0}},\ldots,\id{D_{n}}$, there exists a unique
trace $\av{D_{0}},\ldots,\av{D_{n}}$, which is a sequence of information
system databases $\av{D_{j}}$ defined as follows: 
\[
\av{D_{j}}=\begin{cases}
\id{D_{j}} & \mbox{if }j=0\\
\av{D_{j-1}}\cup\mathrm{copy}_{\CE(t_{j})}(\id{D_{j}}) & \mbox{otherwise.}
\end{cases}
\]
Note that $\av{D_{0}}=\id{D_{0}}$ does not introduce loss of generality
and is just a convention. To start with initially different databases,
one can just add an initial action that introduces data in all real-world
relations.

\subsection{Completeness over QATSs}

An action sequence $\alpha=a_{1},\ldots,a_{n}$ \emph{satisfies} query
completeness of a query $Q$, if for all developments of $\alpha$
it holds that $Q$ is complete over $\id{(D_{n}},\av{D_{n}})$, that
is, if $Q\id{(D_{n}})=Q(\av{D_{n}})$ holds.
A path $P$ in a QATS $\qats$ satisfies query completeness for $Q$,
if its corresponding action sequence satisfies it. 
A state $s$ in a QATS $\qats$ satisfies $\compl Q$, if all action sequences
in $\aseq(s)$ (the set of the action sequences of all paths that
end in $s$) satisfy $\compl Q$. We then write $s\models\compl Q$.
%
%
\begin{example}
Consider the QATS in Figure \ref{schools-bpmn-and-qats}(b) and recall that the action $\textsf{pH}$ is annotated with the real-world effect $\id\pupil(n,1,\Hoferschool)\leftsquigarrow\id\req(n,\Hoferschool)$, and action $\textsf{rH}$ with the copy effect $\id{\pupil}\!(n,1,\Hoferschool)\rightarrow\av{\pupil}\!(n,1,\Hoferschool)$.
A path $\pi=((s_{0},\textsf{pH},s_{1}),$
$(s_{1},\textsf{rH},s_{2}))$ has the corresponding
action sequence $(\textsf{pH},\textsf{rH})$. Its models are all sequences $(\id{D_{0}},\id{D_{1}},\id{D_{2}})$
of real-world databases (developments), where $\id{D_{1}}$ may contain
additional pupil facts at the Hofer school w.r.t.~$\id{D_{0}}$ because of the real-world
effect of action $a_{1}$, and $\id{D_{2}}=\id{D_{1}}$. Each such
development has a uniquely defined trace $(\av{D_{0}},\av{D_{1}},\av{D_{2}})$
where $\av{D_{0}}=\id{D_{0}}$ by definition, $\av{D_{1}}=\av{D_{0}}$
because no copy effect is happening in action $a_{1}$, and $\av{D_{2}}=\av{D_{1}}\cup copy_{\ce(a_{1})}(\id{D_{1}})$,
which means that all pupil facts from Hofer school that hold in the real-world database are
copied into the information system due to the effect of action $a_{1}$.
Thus, the state $s_{2}$ satisfies $\compl{Q_{\textit{Hofer}}}$ for a query $\query{Q_{\textit{Hofer}(n)}}{\pupil(n,c,\Hoferschool)}$, because
in all models of the action sequence the real-world pupils at the Hofer school are copied by the copy effect in action $\textsf{rH}$.
\end{example}

\section{Verifying Completeness over Processes}

In the following, we analyze how to check completeness in a state
of a QATS at design time, at runtime, and how to analyze the completeness
of an incomplete query in detail.

\subsection{Design-Time Verification}

When checking for query completeness at design time, we have to consider
all possible paths that lead to the state in which we want to check
completeness. We first analyze how to check completeness for a single
path, and then extend our results to sets of paths.

Given a query $\query{Q(\tpl z)}{R_{1}(\tpl{t}_{1}),\ldots,R_{n}(\tpl{t}_{n}),M},$
we say that a real-world effect $r$ is \emph{risky} w.r.t.~$Q$, if
there exists a pair of real-world databases $(\id{D_{1}},\id{D_{2}})$
that conforms to $r$ and where the query result changes, that is,
$Q(\id{D_{1}})\neq Q(\id{D_{2}})$. Intuitively, this means that real-world
database changes caused by $r$ can influence the query answer and lead
to incompleteness, if the changes are not copied into the information
system.
\begin{proposition}[Risky effects]
Let $r$ be the real-world effect $R(\tpl x, \tpl y)\leftsquigarrow G_{1}(\tpl x,\tpl z_{1})$, $Q$ be the query $\query{Q}{R_1(\tpl t_1),\ldots R_n(\tpl t_n),M}$ and $\tpl v = \var(Q)$. Then $r$ is risky wrt.~$Q$
 if and only if the following formula is satisfiable:
\[ 
G_{1}(\tpl x,\tpl z_{1}) 
\ \wedge \
\bigl(\bigwedge_{i=1\ldots n} \! R_{i}(\tpl{t}_{i}) \bigr)
\ \wedge \
 M 
\ \wedge \
\bigl( \bigvee_{R_{i}=R}(\tpl x, \tpl y)=\tpl{t}_{i}\bigr)
\]
\end{proposition}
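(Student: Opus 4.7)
The plan is to prove the two directions of the equivalence separately, using the observation that the formula $\phi$ syntactically captures the following property: there is a way to map the body of $Q$ into a database so that at least one of the $R$-atoms of $Q$ lands on a fresh $R$-fact produced by $r$, while the guard $G_1$ holds and the comparisons $M$ are satisfied.

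For the direction ``$r$ risky $\Rightarrow$ $\phi$ satisfiable'', I would begin from any witness $(\id{D_1},\id{D_2})$ conforming to $r$ with $Q(\id{D_1}) \neq Q(\id{D_2})$. Monotonicity of conjunctive queries yields $Q(\id{D_1}) \subseteq Q(\id{D_2})$, so I can pick a tuple $\tpl a \in Q(\id{D_2}) \setminus Q(\id{D_1})$ witnessed by a homomorphism $\nu$ of the body into $\id{D_2}$ with $\nu(\tpl z) = \tpl a$. Some atom $R_{i_0}(\tpl t_{i_0})$ of the body must be sent by $\nu$ to a fact in $\id{D_2} \setminus \id{D_1}$; by conformance, every such fact has the shape $R(\tpl c_x,\tpl c_y)$ with $G_1(\tpl c_x,\tpl c_{z_1})$ holding in $\id{D_1}$ for a suitable $\tpl c_{z_1}$. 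In particular $R_{i_0}=R$, and extending $\nu$ by $\tpl x\mapsto\tpl c_x$, $\tpl y\mapsto\tpl c_y$, $\tpl z_1\mapsto\tpl c_{z_1}$ gives an assignment that satisfies $G_1$, all body atoms $R_i(\tpl t_i)$, the comparisons $M$, and the disjunct $(\tpl x,\tpl y)=\tpl t_{i_0}$.

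For the direction ``$\phi$ satisfiable $\Rightarrow$ $r$ risky'', I would start from a satisfying assignment $\mu$ of $\phi$ with chosen disjunct index $i_0$ and take $\id{D_2}$ to be the canonical database formed from the body and guard atoms under $\mu$, and $\id{D_1} = \id{D_2}\setminus\{R(\mu(\tpl x),\mu(\tpl y))\}$. Conformance is immediate, since the only fresh fact is $R(\mu(\tpl x),\mu(\tpl y))$, whose guard atoms sit in $\id{D_1}$ and whose guard comparisons are satisfied by $\mu$. The delicate step is to ensure $Q(\id{D_1})\neq Q(\id{D_2})$: the answer $\mu(\tpl z)$ lies in $Q(\id{D_2})$ via $\mu$, and must be prevented from lying in $Q(\id{D_1})$ through a rerouting of the $i_0$-th body atom onto some other $R$-fact still present in $\id{D_1}$. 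I would block this by choosing $\mu$ as a \emph{most-collapsed} satisfier of $\phi$, that is, one that identifies variables to as few pairwise distinct constants as the constraints in $M$ and the disjunct allow. Any alternative homomorphism into $\id{D_1}$ would then yield a strictly finer collapse still compatible with $M$, contradicting the minimality of $\mu$; hence $\mu(\tpl z) \notin Q(\id{D_1})$.

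The main obstacle is precisely this collapsing step, which is where the syntactic satisfiability of $\phi$ must be converted into a database pair that genuinely separates $Q(\id{D_1})$ from $Q(\id{D_2})$. The remainder of the argument is routine bookkeeping between effect variables, query variables, and the conformance condition of $r$, and reuses the canonical-database technique familiar from classical conjunctive-query containment.
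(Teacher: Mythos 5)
Your overall strategy is the same as the paper's: for ``risky $\Rightarrow$ satisfiable'' you extract a homomorphism witnessing a new answer tuple and use conformance to read off the guard (this direction is fine, and in fact more carefully argued than in the paper), and for ``satisfiable $\Rightarrow$ risky'' you freeze a satisfying assignment into a canonical pair $(\id{D_1},\id{D_2})$ differing in the single fact $R(\mu\tpl x,\mu\tpl y)$, exactly as the paper does. You also correctly identify the delicate point, namely showing $Q(\id{D_1})\neq Q(\id{D_2})$, which the paper simply asserts (``a new valuation is possible''). The gap is that your proposed fix does not work: choosing a \emph{most-collapsed} satisfier $\mu$ and arguing that an alternative homomorphism into $\id{D_1}$ would contradict minimality is a non sequitur. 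Such a homomorphism $\nu$ only has to satisfy the body atoms and $M$; it is not required to satisfy the guard comparisons or the equality disjunct, and it reuses constants already in the range of $\mu$, so it neither is a satisfier of the formula nor induces one with strictly fewer distinct constants.

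Concretely, take $r:\ \id{R}(u)\leftsquigarrow \id{S}(w),\,w<u$ and $Q()\qif R(a),R(b),S(b)$. The assignment $\mu$ with $a=w=1$, $u=b=2$ is most-collapsed (two constants are forced by $w<u$) and satisfies the disjunct $u=b$; freezing gives $\id{D_2}=\{R(1),R(2),S(1),S(2)\}$ and $\id{D_1}=\id{D_2}\setminus\{R(2)\}$. This pair conforms to $r$, but $Q(\id{D_1})=Q(\id{D_2})$, since $a=b=1$ still satisfies the body inside $\id{D_1}$; and folding $a,b$ to $1$ does not produce a smaller satisfier of the formula, because the guard still forces $w<u$ and hence two constants, so minimality yields no contradiction. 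Thus your construction fails to witness riskiness even though $r$ \emph{is} risky here (witness $\id{D_1}=\{S(1),S(2)\}$, $\id{D_2}=\id{D_1}\cup\{R(2)\}$, which corresponds to the other most-collapsed satisfier $a=b=u=2$, $w=1$). So the criterion ``minimize the number of distinct constants'' neither determines the right satisfier nor supports the claimed contradiction; the choice must instead be sensitive to how the query can fold onto the remaining facts (in the example, one must identify the second $R$-atom's variables with $(\tpl x,\tpl y)$), and this step needs a genuinely different argument. A smaller side remark: your claim that ``conformance is immediate'' also tacitly assumes the instantiated guard does not itself contain the removed fact (e.g.\ it fails for $\id{R}(x,y)\leftsquigarrow \id{R}(y,x)$), an assumption the paper makes silently as well.
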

\begin{fullversion}{
\begin{proof}
"$\Leftarrow :$"
If the formula is satisfied for some assignment $\delta$, this satisfying assignment directly yields an example showing that $r$ is risky wrt.\ $Q$ as follows: Suppose that the disjunct is satisfied for some $i=k$. Then we can construct databases $\id D_1$ and $\id D_2$ as $\id D_1 = G_1(\delta \tpl x, \delta \tpl z_1) \cup \set{ \bigwedge_{i=1\ldots n,i\neq k} R_i(\delta \tpl t_i)}$ and $\id D_2 = \id D_1 \cup \set{R_k(\delta \tpl t_k)}$.
Clearly, $(\id D_1 ,\id D_2)$ satisfies the effect $r$ because for the only additional fact $R_k(\delta \tpl t_k)$ in $\id D_2$, the condition $G_1$ is contained in $(\id D_1)$. But $Q(\id D_1)\neq Q(\id D_2)$ because with the new fact, a new valuation for the query is possible by mapping each atom to itself.

"$\Rightarrow :$" Holds by construction of the formula, which checks whether it is possible for $R$-facts to satisfy both $G_1$ and $Q$. Suppose $r$ is risky wrt.\ $Q$. Then there exists a pair of databases $(\id D_1 ,\id D_2)$ that satisfies $r$ and where $Q(\id D_1)\neq Q(\id D_2)$. Thus, all new facts in $\id D_2$ must conform to $G_1$ and some facts must also contribute to new evaluations of $Q$ that lead to $Q(\id D_1)\neq Q(\id D_2)$. Thus, each such facts implies the existence of a satisfying assignment for the formula. \qed
\end{proof}
}\end{fullversion}
\begin{example}
Consider the query $\query{Q(n)}{\pupil(n,c,s),\livesin(n,\Bolzano)}$
and the real-world effect $r_{1}=\pupil(n,c,s)\leftsquigarrow c=4$, which allows to add new pupils in class 4 in the real world. Then
$r_{1}$ is risky w.r.t.~$Q$, because pupils in class 4 can potentially
also live in Bolzano. Note that without integrity constraints, actually
most updates to the same relation will be risky: if we do not have
keys in the database, a pupil could live both in Bolzano and Merano
and hence an effect $r_{2}=\pupil(n,c,s)\leftsquigarrow\livesin(n,\Merano)$
would be risky w.r.t.~$Q$, too. If there is a key defined over the first
attribute of $\livesin$, then $r_{2}$ would not be risky, because
adding pupils that live in Merano would not influence the completeness
of pupils that only live in Bolzano.
\end{example}
We say that a real-world effect $r$ that is risky w.r.t.~a query $Q$
is \emph{repaired} by a set of copy effects $\{c_{2},\ldots,c_{n}\}$, if
for any sequence of databases $(\id{D_{1}},\id{D_{2}})$ that conforms
to $r$ it holds that $Q(\id{D_{2}})=Q(\id{D_{1}}\cup\mathit{\mathit{copy}}{}_{c_{1}\ldots c_{n}}(\id{D_{2}}))$.
Intuitively, this means that whenever we introduce new facts via $r$
and apply the copy effects afterwards, all new facts that can
change the query result are also copied into the information system.
\begin{proposition}[Repairing]
Consider the query $\query{Q}{R_1(\tpl t_1),\ldots R_n(\tpl t_n),M}$, let $\tpl v=\var(Q)$, a real-world effect
$R(\tpl x, \tpl y)\leftsquigarrow G_{1}(\tpl x,\tpl z_{1})$
and a set of copy effects
$\{c_{2},\ldots,c_{m}\}$. Then $r$ is repaired
by $\{c_{2},\ldots,c_{m}\}$ if and only if the following formula
is valid:
\[
\forall\tpl x,\tpl y\!:\ \biggl(\Bigl(\exists\tpl z_{1},\tpl v\!:\ (G_{1}(\tpl x,\tpl z_{1})
\wedge 
\bigwedge_{i=1\ldots n}\!R_{i}(\tpl{t}_{i})
\wedge 
M
\wedge 
\bigvee_{R_{i}=R}(\tpl x,\tpl y)=\tpl{t}_{i}\Bigr)
\ \ \Longrightarrow \ \ 
\bigvee_{j=2\ldots m} \exists \tpl z_j\!:\ G_j(\tpl x, \tpl z_j \biggr)
\]
\end{proposition}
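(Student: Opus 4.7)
My plan is to prove the equivalence by two implications, reading the displayed formula as a CQ-style implication between positive existential formulas at the tuple $(\tpl x, \tpl y)$ and leveraging the standard freeze/canonical-instance characterization of such implications familiar from conjunctive query containment.

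For the $\Leftarrow$ direction, I would fix an arbitrary pair $(\id D_{1}, \id D_{2})$ conforming to $r$, set $\av D_{2} := \id D_{1} \cup \mathit{copy}_{\{c_{2},\ldots,c_{m}\}}(\id D_{2})$, and show $Q(\id D_{2}) = Q(\av D_{2})$. The inclusion $Q(\av D_{2}) \subseteq Q(\id D_{2})$ is immediate from monotonicity of conjunctive queries, since $\id D_{1} \subseteq \id D_{2}$ (conforming pairs only add facts) and the copy function produces facts that are already in $\id D_{2}$. For the reverse inclusion, I would take an answer $\tpl c = \delta \tpl z \in Q(\id D_{2})$ with witness valuation $\delta$ and argue per body atom: atoms in $\id D_{1}$ are directly in $\av D_{2}$, while an atom $R_{i}(\delta \tpl t_{i}) \in \id D_{2}\setminus \id D_{1}$ forces $R_{i}=R$ and supplies, by conformance to $r$, a witness $\tpl c^{*}$ for $G_{1}(\delta \tpl x, \tpl c^{*})$ in $\id D_{1}$. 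Hence the antecedent of the formula is satisfied at $(\delta \tpl x, \delta \tpl y)$, and validity yields some $j$ and $\tpl z_{j}^{*}$ with $G_{j}(\delta \tpl x, \tpl z_{j}^{*})$ in $\id D_{2}$, so the atom is copied into $\av D_{2}$.

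For the $\Rightarrow$ direction, I would argue the contrapositive. If the formula fails to be valid, then for some disjunct indexed by $i^{*}$ with $R_{i^{*}}=R$ there is a valuation $\delta$ of $\tpl x, \tpl y, \tpl z_{1}, \tpl v$ that makes the antecedent true while falsifying every $G_{j}$. Freezing $\delta$ with fresh constants gives the canonical database $\id D_{2} := \{G_{1}(\delta \tpl x, \delta \tpl z_{1})\} \cup \{R_{i}(\delta \tpl t_{i}) : i=1,\ldots,n\}$. I would set $\id D_{1} := \id D_{2} \setminus \{R(\delta \tpl x, \delta \tpl y)\}$ and verify three items: (a) the pair conforms to $r$, because $G_{1}(\delta \tpl x, \delta \tpl z_{1})$ still lives in $\id D_{1}$; (b) $\delta \tpl z \in Q(\id D_{2})$ by construction; and (c) $\delta \tpl z \notin Q(\id D_{1}\cup \mathit{copy}(\id D_{2}))$, because $R(\delta \tpl x, \delta \tpl y)$ is missing from $\id D_{1}$ and, by choice of counter-valuation, no copy effect's guard $G_{j}$ fires at $\delta \tpl x$ in $\id D_{2}$. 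This contradicts repairing.

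The main obstacle is the freeze in the contrapositive, where three standard-but-delicate points need attention. First, if $G_{1}$ itself mentions the predicate $R$, then the designated atom $R(\delta \tpl x, \delta \tpl y)$ could be embedded in $G_{1}$ and its removal would break the guard; this is handled by choosing the freeze of $\tpl y$ with fresh constants so that the designated fact is isolated, and, if necessary, by picking a minimal counterexample. Second, if another body atom $R_{k}$ with $R_{k}=R$ happens to be mapped by $\delta$ to the same tuple $(\tpl x, \tpl y)$, the removal becomes vacuous; the standard minimal-counterexample argument of CQ containment lets us pick the disjunct $i^{*}$ so that the removal is meaningful. Third, the built-in comparisons $M$ must be jointly satisfied by the frozen valuation, which is where the dense ordered domain $\dom$ assumed by the paper plays a role, letting us pick numerical witnesses lazily. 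Modulo these routine CQ-with-comparisons considerations, both directions reduce to the semantic unfolding sketched above.
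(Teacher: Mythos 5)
Your proposal is correct and follows essentially the same route as the paper's own proof: a direct semantic unfolding (monotonicity plus the copy-effect guards firing) for the \lq\lq{}if\rq\rq{} direction, and a frozen canonical pair $(\id D_1,\id D_2)$ built from a falsifying valuation for the \lq\lq{}only if\rq\rq{} direction. You are in fact more explicit than the paper about the delicate points of the freeze (fresh constants for $\tpl y$, coinciding $R$-atoms, satisfying the comparisons $M$ via the dense order), which the paper's terse argument glosses over.
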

\begin{fullversion}{
\begin{proof}
"$\Leftarrow :$"
Straightforward. If the formula is valid, it implies that any fact $R(\tpl x)$ that is introduced by the real-world effect $r$ and which can change the result of $Q$ also satisfies the condition of some copy effect and hence will be copied.

"$\Rightarrow :$"
Suppose the formula is not valid. Then there exists a fact $R(\tpl x)$ which satisfies the condition of the implication (so $R(\tpl x)$ can both conform to $r$ and change the result of $Q$) but not the consequence (it is not copied by any copy effect).
Thus, we can create a pair $(\id D_1, \id D_2)$ of databases as before as $\id D_1 = G_1(\tpl x, \tpl y) \cup \set{ \bigwedge_{i=1\ldots n,i\neq k} R_i(\tpl t_i)}$ and $\id D_2 = \id D_1 \cup \set{R_k(\tpl t_k)}$ which proves that $Q(\id D_2)\neq Q(\id D_1 \cup \mathit{copy}_{c_1,\ldots c_m}(\id D_2)$. \qed
\end{proof}
}\end{fullversion}
This implication can be translated into a problem of query
containment, a well-studied topic in database theory \cite{meyden-Complexity_querying_ordered_domains-pods,rosati:2003:containment-keys-and-foreign-keys,Sagiv:Yannakakis-Containment-VLDB,Razniewski:Nutt-Compl:of:Queries-VLDB11}.
\global\long\def\ucont#1{\mathit{ContU}(#1)}
\global\long\def\ent#1{\mathit{EntC}(#1)}
In particular, for a query $\query{Q(\tpl z)}{R_{1}(\tpl{t}_{1}),\ldots,R_{n}(\tpl{t}_{n})}$,
we define the atom-projection of $Q$ on the $i$-th atom as $\query{Q^\pi_{i}(\tpl x)}{R_{1}(\tpl{t}_{1}),\ldots,R_{n}(\tpl{t}_{n}),\tpl x=\tpl{t}_{i}}$.
Then, for a query $Q$ and a relation $R$, we define the $R$-projection
of $Q$, written $Q^{R}$, as the union of all the atom-projections
of atoms that use the relation symbol $R$, that is, $\bigcup_{R_{i}=R}Q^\pi_{i}$.
For a real-world effect $r=R(\tpl x, \tpl y)\leftsquigarrow G(\tpl x,\tpl z)$, we define its associated query $P_r$ as $\query{P_r(\tpl x, \tpl y)}{R(\tpl x, \tpl y), G(\tpl x, \tpl z)}$.
\begin{corollary}[Repairing and query containment]
\label{cor:repairing-if-containment}Let $Q$ be a query, $\alpha=a_1,\ldots a_n$ be an action sequence,
$a_{i}$ be an action with a risky real-world effect $r$, and $\{c_{1},\ldots,c_{m}\}$
be the set of all copy effects of the actions $a_{i+1}\ldots a_{n}$.

Then $r$ is repaired, if and only if it holds that $P_{r}\cap Q^{R}\subseteq P_{c_{1}}\cup\ldots\cup P_{c_{m}}$.
\end{corollary}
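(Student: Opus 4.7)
The plan is to derive the corollary directly from the Proposition on Repairing, by showing that the validity of the first-order formula in that proposition is equivalent to the stated query containment. Nothing new has to be proved about the semantics of repairing itself; only the translation of the formula into queries-as-formulas has to be carried out carefully.

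First, I would unfold the three queries involved. By definition, $P_r$ has body $R(\tpl x,\tpl y)\wedge G_1(\tpl x,\tpl z_1)$, so a tuple $(\tpl x,\tpl y)$ belongs to $P_r(D)$ iff $D$ witnesses $R(\tpl x,\tpl y)$ together with the guard. The $R$-projection $Q^R$ is a union of atom-projections $Q^\pi_i$ over those $i$ with $R_i=R$, and $(\tpl x,\tpl y)\in Q^R(D)$ iff there is a homomorphism from the body of $Q$ into $D$ mapping the selected $i$-th atom to $R(\tpl x,\tpl y)$. The union $P_{c_1}\cup\ldots\cup P_{c_m}$ returns $(\tpl x,\tpl y)$ iff $R(\tpl x,\tpl y)$ holds in $D$ and some guard $G_j(\tpl x,\tpl z_j)$ is witnessed.

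Second, I would match the formula of the Proposition to these query bodies. The existentially quantified antecedent $G_1(\tpl x,\tpl z_1)\wedge\bigwedge_i R_i(\tpl t_i)\wedge M\wedge\bigvee_{R_i=R}(\tpl x,\tpl y)=\tpl t_i$ is precisely the body whose answers are $P_r\cap Q^R$: the $G_1$-conjunct and the implicit $R(\tpl x,\tpl y)$ (obtained from $(\tpl x,\tpl y)=\tpl t_i$ together with the $R_i$-atom in the disjunct) give $P_r$, while the remaining atoms and comparisons of $Q$ together with the equation give the selected atom-projection of $Q^R$. The consequent $\bigvee_j\exists\tpl z_j\!:G_j(\tpl x,\tpl z_j)$ in turn characterises membership in $\bigcup_j P_{c_j}$, because the $R(\tpl x,\tpl y)$ atom required by each $P_{c_j}$ is already supplied by the antecedent. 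Hence the outermost $\forall\tpl x,\tpl y$ quantification is exactly the universal quantification over tuples and databases used in the semantics of query containment, and validity of the implication coincides with $P_r\cap Q^R\subseteq P_{c_1}\cup\ldots\cup P_{c_m}$.

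The step I expect to require the most care is the variable bookkeeping: making sure that the variables $\tpl x,\tpl y$ in the real-world effect, in the equations $(\tpl x,\tpl y)=\tpl t_i$ of the atom-projection, and in each $G_j$ of the copy effects are all identified consistently, while the existentially quantified $\tpl v=\var(Q)$, $\tpl z_1$ and $\tpl z_j$ are kept apart. Once this is settled, both directions follow at once: a counterexample tuple $(\tpl x,\tpl y)\in P_r(D)\cap Q^R(D)\setminus\bigcup_j P_{c_j}(D)$ immediately falsifies the implication on $D$, and conversely any assignment falsifying the implication yields, via the standard canonical-database construction, a database on which the containment fails, which by the Proposition means that $r$ is not repaired.
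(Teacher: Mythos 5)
Your proposal is correct and follows essentially the same route as the paper, which derives the corollary from the Repairing proposition via the direct correspondence between conjunctive queries and relational calculus (the paper merely asserts this in one sentence, while you spell out the unfolding of $P_r$, $Q^R$ and the $P_{c_j}$, including the subtle point that the $R(\tpl x,\tpl y)$ atom needed by each $P_{c_j}$ is supplied by the antecedent). No gap to report.
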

Intuitively, the corollary says that a risky effect $r$ is repaired, if all data that is introduced by $r$ that can potentially change the result of the query $Q$ are guaranteed to be copied into the information system database by the copy effects $c_1$ to $c_n$.

The corollary holds because of the direct correspondence between conjunctive queries and relational calculus \cite{foundations_of_dbs}.

We arrive at a result for characterizing query completeness wrt.\ an action sequence:
\begin{lemma}
[Action sequence completeness]Let $\alpha$ be
an action sequence and $Q$ be a query. Then $\alpha\models\compl Q$
if and only if all risky effects in $\alpha$ are repaired.\label{lem:incompleteness-if-unhealed-rw-action}\end{lemma}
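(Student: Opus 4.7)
The plan is to prove both directions by invoking Corollary~\ref{cor:repairing-if-containment}, which recasts the repair of an individual risky effect as a containment $P_r\cap Q^R\subseteq P_{c_1}\cup\cdots\cup P_{c_m}$ between conjunctive queries.

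For the ``if'' direction I fix an arbitrary development $\id{D_0},\ldots,\id{D_n}$ with corresponding trace $\av{D_0},\ldots,\av{D_n}$ and aim at $Q(\id{D_n})=Q(\av{D_n})$. A short induction on the step index shows $\av{D_k}\subseteq\id{D_k}$, so monotonicity of conjunctive queries immediately yields $Q(\av{D_n})\subseteq Q(\id{D_n})$. For the reverse inclusion I take a valuation $\mu$ with $\mu(\text{body}(Q))\subseteq\id{D_n}$ and show that every image fact $f_i=\mu(R_i(\tpl t_i))$ already lies in $\av{D_n}$. Facts in $\id{D_0}=\av{D_0}$ are immediate. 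For a fact $f_i$ first introduced at step $j$, conformance of $(\id{D_{j-1}},\id{D_j})$ to some $r\in\re(a_j)$ provides a witness for the guard $G_1$ of $r$, which, combined with $\mu$, furnishes a satisfying assignment for the formula of the risky-effect proposition, so $r$ is risky and therefore by assumption repaired. The same data witness the tuple in $P_r(\id{D_n})\cap Q^R(\id{D_n})$, so Corollary~\ref{cor:repairing-if-containment} places it in some $P_{c_k}(\id{D_n})$; unpacking, the guard of $c_k$ fires on $f_i$, and $c_k$ copies $f_i$ into $\av{D_n}$.

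For the ``only if'' direction I argue contrapositively: suppose some risky $r\in\re(a_i)$ is not repaired. Corollary~\ref{cor:repairing-if-containment} then exhibits a database $D^\star$ and a tuple $(\tpl x_0,\tpl y_0)\in P_r(D^\star)\cap Q^R(D^\star)$ that lies in no $P_{c_k}(D^\star)$. Unpacking, $D^\star$ contains the fact $f:=R(\tpl x_0,\tpl y_0)$, satisfies the guard of $r$ at $\tpl x_0$, and supports a valuation $\nu$ of $Q$ sending some $R$-atom to $f$, while no copy guard fires at $\tpl x_0$ in $D^\star$. I construct a witness development by setting $\id{D_0}:=D^\star\setminus\{f\}$, leaving the real-world database unchanged before $a_i$, introducing $f$ at step $i$ (so $\id{D_i}=D^\star$), and freezing afterwards. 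Identity pairs conform trivially, and $(\id{D_{i-1}},\id{D_i})$ conforms to $r$ because the guard $G_1(\tpl x_0,\tpl z_1)$ persists in $\id{D_{i-1}}$. Since $\av{D_0}=\id{D_0}$ does not contain $f$ and no copy effect in $c_1,\ldots,c_m$ fires at $\tpl x_0$ at any subsequent step, we get $f\notin\av{D_n}$, whereas $\nu$ evidently witnesses $\nu(\text{head}(Q))\in Q(\id{D_n})$.

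The hard part is then to guarantee $\nu(\text{head}(Q))\notin Q(\av{D_n})$, since in principle some \emph{alternative} valuation into $\av{D_n}$ might recover the same answer. I plan to handle this by the standard freezing trick: pick the satisfying assignment of the non-containment formula so that every variable of $Q$, of $r$, and of the relevant guards is mapped to a pairwise distinct fresh constant, which is always possible because conjunctive query (non-)containment is captured by canonical frozen instances. The resulting $D^\star$ is then rigid enough that any valuation of $Q$ into $\av{D_n}$ returning $\nu(\text{head}(Q))$ must agree with $\nu$ on all distinguished variables, and hence must map the offending $R$-atom to $f$ itself, which is impossible once $f$ is removed. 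One side point worth flagging is that Corollary~\ref{cor:repairing-if-containment} considers only copy effects strictly after $a_i$; the running example suggests the paper tacitly assumes each action carries at most one kind of effect, so no copy effect at $a_i$ itself can fire on $f$.
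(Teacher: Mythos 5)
Your overall strategy is the same as the paper's: both directions are routed through Corollary~\ref{cor:repairing-if-containment}, the ``if'' direction by tracing every fact used by a satisfying valuation back to the real-world effect that introduced it, observing that this effect is risky and hence repaired, and concluding from the containment that the fact is copied; the ``only if'' direction by turning a non-containment witness $D$ into the development that equals $D\setminus\{R(t)\}$ before the offending action and $D$ from then on, so that the trace never acquires $R(t)$. The one place where you go beyond the paper is the final step of the ``only if'' direction: the paper simply asserts that the constructed pair $(\id{D_n},\av{D_n})$ violates $\compl{Q}$, whereas you rightly notice that some alternative valuation into $\av{D_n}$ might reproduce the same answer, and you try to exclude this by freezing the witness with pairwise distinct constants.

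That patch, however, does not close the gap for queries with self-joins, which are within the paper's scope (only \emph{linear} queries forbid repeated relation symbols). For $Q(x)\qif R(x,y),R(x,z)$ the frozen witness contains both $R(x_0,y_0)$ and $R(x_0,z_0)$; after deleting $f=R(x_0,y_0)$, the valuation sending both atoms to $R(x_0,z_0)$ still returns $x_0$, so agreement on the distinguished variables does not force the offending $R$-atom onto $f$, and your rigidity claim is false as stated. To make this direction watertight one has to choose the counterexample more carefully, e.g.\ argue directly from the semantic definition of ``repaired'', which supplies a pair $(\id{D_1},\id{D_2})$ conforming to $r$ with $Q(\id{D_2})\neq Q(\id{D_1}\cup\mathit{copy}(\id{D_2}))$ and lets you build the refuting development from that pair, rather than from the canonical frozen instance of the containment. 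Apart from this point your argument matches the paper's, including its tacit assumptions (monotonically growing developments, at most one kind of effect per action, and, in the ``if'' direction, that a copy guard satisfied in the final real-world database was already satisfied when the copying action actually executed); these are glossed over by the paper's own proof as well, so they are not specific shortcomings of your proposal.
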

\begin{proof}
``$\Leftarrow$'': Assume that all risky real-world effects in $\alpha$ are repaired in $\alpha$. Then by Lemma \ref{cor:repairing-if-containment} any fact introduced by a real-world effect $r$ which can potentially also
influence the satisfaction of $\compl{Q}$ also satisfies the condition of
some later copy effect, and hence it is eventually copied into some
$\av{D_{j}}$ and hence it also appears in $\av{D_{n}}$, which implies
that $C$ is satisfied over $(\id{D_{n}},\av{D_{n}})$.

``$\Rightarrow$'': Assume the repairing does not hold for some risky
effect $r$ of an action $a_{i}\in\alpha$. Then by Lemma \ref{cor:repairing-if-containment},
since the containment does not hold, there exists a database $D$
with a fact $R(t)$ that is in $Q_{r}\cap Q^{R}(D)$ but not in $Q_{c_{i+1}}\cup\ldots\cup Q_{c_{n}}(D)$.
Then, we can create a development $\id{D_{0}},\ldots,\id{D_{n}}$
of $\alpha$ as $\id{D_{0}},\ldots,\id{D_{i-1}}=D\setminus\{R(t)\}$
and $\id{D_{i}},\ldots,\id{D_{n}}=D$. Its trace is $\av{D_{0}},\ldots,\av{D_{n}}=D\setminus\{R(t)\}$,
because since the containment does not hold, for none of the copy
effects in the following actions its guard evaluates to true for the
fact $R(t)$ and hence $R(t)$ is never copied into the information system
database. But since $R(t)$ is in $Q^{R}(D)$, query completeness
for $Q$ is not satisfied over $(\id{D_{n}},\av{D_{n}})$ and hence
$\alpha\not\models\compl Q$. \qed
\end{proof}
%
Before discussing complexity results in Section \ref{sub:complexity},
we show that completeness entailment over action sequences and containment
of unions of queries have the same complexity.\global\long\def\L{\mathcal{L}}
A query language is defined by the operations that it allows. Common sublanguages of conjunctive queries are, e.g., queries without arithmetic comparisons (so-called relational queries), or queries without repeated relation symbols (so-called linear queries).

For a query language $\L$, we call $\ent{\L}$ the problem of deciding whether an action sequence $\alpha$ entails completeness of a query $Q$, where $Q$ and the real-world effects and the copy effects in $\alpha$ are formulated in language $\L$. Also, we call $\ucont{\L}$ the problem
of deciding whether a query is contained in a union of queries, where all are formulated in the language $\L$.
\begin{theorem}
Let $\L$ be a query languages. Then $\ent{\L}$
and $\ucont{\L}$ can be reduced to each other in linear
time.\label{thm:containment=complentailment}\end{theorem}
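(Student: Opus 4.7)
The plan is to leverage Lemma~\ref{lem:incompleteness-if-unhealed-rw-action} and Corollary~\ref{cor:repairing-if-containment}, which together recast completeness entailment as a finite conjunction of union-containment checks, one per risky real-world effect. I will use this characterization to build linear-time reductions in both directions.

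For the direction $\ent{\L} \leq \ucont{\L}$, I would traverse the action sequence $\alpha$ once. For each real-world effect $r$ attached to some action $a_i$ with relation symbol $R$, I collect the subsequent copy effects $c_{i_1},\ldots,c_{i_k}$ that act on $R$, and emit the containment instance $P_r \cap Q^R \subseteq P_{c_{i_1}} \cup \cdots \cup P_{c_{i_k}}$. By Corollary~\ref{cor:repairing-if-containment}, this is exactly the repair condition for $r$; by Lemma~\ref{lem:incompleteness-if-unhealed-rw-action}, the conjunction of all such conditions is equivalent to $\alpha \models \compl{Q}$. The total size of the emitted instances is linear in $|\alpha|+|Q|$. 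To obtain a single $\ucont{\L}$ instance, I would rename the relation symbols occurring in the $i$-th containment with a fresh suffix $i$ and then take the union of both sides; a disjoint-signature argument, in which a counterexample database is restricted to the symbols tagged $i$, shows that the combined union-vs-union containment holds iff each individual containment does.

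For the direction $\ucont{\L} \leq \ent{\L}$, given an instance $Q_0 \subseteq Q_1 \cup \cdots \cup Q_m$ in which the queries share the head variables $\tpl x$ of arity $k$, I introduce a fresh $k$-ary relation $R$ and construct a two-action sequence $\alpha$: the first action carries the real-world effect $r: R(\tpl x) \leftsquigarrow B_0(\tpl x,\tpl z_0)$, where $B_0$ is the body of $Q_0$, and the second action carries the copy effects $c_i: R(\tpl x), B_i(\tpl x,\tpl z_i) \rightarrow \av{R}(\tpl x)$, one per $Q_i$ with body $B_i$. The test query is $Q = \query{Q(\tpl x)}{R(\tpl x)}$. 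Unwinding the definitions, $P_r \cap Q^R$ reduces to $\query{}{R(\tpl x), B_0}$ and each $P_{c_i}$ to $\query{}{R(\tpl x), B_i}$, so Corollary~\ref{cor:repairing-if-containment} translates $\alpha \models \compl{Q}$ into the single containment $\query{}{R(\tpl x), B_0} \subseteq \bigcup_i \query{}{R(\tpl x), B_i}$.

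The main obstacle is to justify that the fresh $R$-atom may be stripped from both sides without affecting truth. Since $R$ occurs in none of the bodies $B_0, B_1, \ldots, B_m$, any counterexample database to $Q_0 \subseteq \bigcup_i Q_i$ can be enriched with a single fact $R(\tpl t)$ to witness failure of the decorated containment, and conversely any counterexample to the decorated version already witnesses failure of the undecorated one, because the shared atom $R(\tpl x)$ contributes only a matching equality constraint to every disjunct. A minor edge case is that $Q_0$ may be unsatisfiable, making $r$ useless; this is detected in linear time, and I would then emit an empty action sequence, for which completeness entailment trivially holds, matching the trivial validity of the input containment.
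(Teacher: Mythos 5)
Your proposal takes essentially the same route as the paper: the forward direction uses the characterization of Lemma~\ref{lem:incompleteness-if-unhealed-rw-action} together with Corollary~\ref{cor:repairing-if-containment} to emit one containment check per real-world effect, and the backward direction is exactly the paper's construction of a two-action sequence with a fresh relation $R$, a real-world effect guarded by the body of $Q_0$, and copy effects guarded by the bodies of $Q_1,\ldots,Q_m$ with test query $\query{Q(\tpl x)}{R(\tpl x)}$; your explicit justification for stripping the fresh $R$-atom and your handling of the useless-effect edge case are in fact more careful than the paper's own argument. The only caveat concerns your extra step of merging the per-effect checks into a single instance by renaming signatures and ``taking the union of both sides'': the resulting question has a union of queries on the left-hand side, so it is not literally an instance of $\ucont{\L}$ as defined (a single query contained in a union of queries) --- indeed $P_r \cap Q^R$ is already a union whenever $R$ occurs more than once in $Q$, so one should decompose per $R$-atom rather than aggregate. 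This step is unnecessary, however: the paper's notion of a linear-time reduction here is simply a linear number of containment checks, which your main argument already delivers, so dropping the merging step leaves a proof matching the paper's.
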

\begin{proof}
``$\Rightarrow$'': Consider the characterization shown in Lemma \ref{lem:incompleteness-if-unhealed-rw-action}.
For a fixed action sequence, the number of containment checks is the
same as the number of the real-world effects of the action sequence and thus linear.

``$\Leftarrow$'': Consider a containment problem $Q_{0}\subseteq Q_{1}\cup\ldots\cup Q_{n}$,
for queries in a language $\L$.
Then we can construct a QATS $\qats=(S,s_{0},A,E,\re,\ce)$ over the schema
of the queries together with a new relation $R$ with the same arity
as the queries where $S=\{s_{0},s_{1},s_{2}\}$, $A=\{a_{1},a_{2}\},\re(a_{1})=\{\id R(\tpl x)\leftsquigarrow Q_{0}(\tpl x)\}$
and $\ce(a_{2})=\bigcup_{i=1\ldots n}\{Q_{i}(\tpl x)\rightarrow\av R(\tpl x)\}$.
Now, the action sequence $a_{1},a_{2}$ satisfies a query completeness
for a query $\query{Q'(\tpl x)}{R(\tpl x)}$ exactly if $Q_{0}$ is
contained in the union of the queries $Q_{1}$ to $Q_{n}$, because
only in this case the real-world effect at action $a_{1}$ cannot
introduce any facts into $\id{D_{1}}$ of a development of $a_{1},a_{2}$,
which are not copied into $\av{D_{2}}$ by one of the effects of the
action $a_{2}$. \qed
\end{proof}
We discuss the complexity of query containment and hence of completeness
entailment over action sequences more in detail in Section \ref{sub:complexity}.

So far, we have shown how query completeness over a path can be checked.
To verify completeness in a specific state, we have to consider
all paths to that state, which makes the analysis more difficult.
We first introduce a lemma that allows to remove repeated actions
in an action sequence:
\begin{lemma}
[Duplicate removal]Let $\alpha=\alpha_{1},\tilde{a},\alpha_{2},\tilde{a},\alpha_{3}$
be an action sequence with $\tilde{a}$ as repeated action and let
$Q$ be a query. Then $\alpha$ satisfies $\compl Q$ if and only
if $\alpha'=\alpha_{1},\alpha_{2},\tilde{a},\alpha_{3}$ satisfies
$\compl Q$. \label{lem:repeated-transitions-can-be-ignored}\end{lemma}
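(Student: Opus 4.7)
The plan is to invoke Lemma~\ref{lem:incompleteness-if-unhealed-rw-action} to reduce the claim to a statement purely about risky effects being repaired, and then to match the repair obligations in $\alpha$ and $\alpha'$ position by position. By that lemma, $\alpha \models \compl{Q}$ iff every risky real-world effect appearing at some position in $\alpha$ is repaired by the set of copy effects of the actions strictly to its right (and analogously for $\alpha'$). So I have to show that this collection of ``(risky effect, following copy-effect set)'' repair obligations is equi-satisfiable across the two sequences.

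The main observation is that ``repair'' depends, by Corollary~\ref{cor:repairing-if-containment}, only on the \emph{set} of copy effects occurring to the right of the given position, not on how often each copy effect occurs or where. Hence, removing one of the two copies of $\tilde a$ does not change, for any action $a$ whose position is left of both $\tilde a$'s, the available set of following copy effects: $\ce(\tilde a)$ still appears (once) in $\alpha'$, and the sets $\ce(\alpha_2), \ce(\alpha_3)$ are untouched. The same holds verbatim for actions inside $\alpha_2$ and $\alpha_3$. So for every action position other than the two $\tilde a$'s, the repair obligation in $\alpha$ and in $\alpha'$ coincides.

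The only subtle case is the action $\tilde a$ itself, whose risky effects appear twice in $\alpha$ but only once in $\alpha'$. Here I would use the monotonicity of repairing in the available copy-effect set: if a risky effect is repaired by some set $C$ of copy effects (in the sense of Corollary~\ref{cor:repairing-if-containment}), then it is repaired by any superset $C' \supseteq C$, because the right-hand side of the containment $P_{r}\cap Q^{R}\subseteq P_{c_{1}}\cup\ldots\cup P_{c_{m}}$ only grows. In $\alpha$, the first occurrence of $\tilde a$ has the strictly larger copy-effect set $\ce(\alpha_2) \cup \ce(\tilde a) \cup \ce(\alpha_3)$ available, while the second occurrence has only $\ce(\alpha_3)$; so the binding constraint is the second occurrence. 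In $\alpha'$, the single occurrence of $\tilde a$ has exactly $\ce(\alpha_3)$ available. Hence the risky effects of $\tilde a$ are all repaired in $\alpha$ iff they are repaired by $\ce(\alpha_3)$ iff they are all repaired in $\alpha'$.

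Combining the two cases, every risky effect in $\alpha$ is repaired iff every risky effect in $\alpha'$ is repaired, and Lemma~\ref{lem:incompleteness-if-unhealed-rw-action} yields $\alpha \models \compl{Q} \Leftrightarrow \alpha' \models \compl{Q}$. The only non-routine step is the monotonicity argument for $\tilde a$; once that is stated, everything else is bookkeeping on which copy-effect sets coincide. I would package the monotonicity observation as a one-line remark from the shape of the containment in Corollary~\ref{cor:repairing-if-containment}, so that the proof itself remains very short.
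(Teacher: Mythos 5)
Your proposal is correct and follows essentially the same route as the paper's own proof: both reduce the claim via Lemma~\ref{lem:incompleteness-if-unhealed-rw-action} to checking that every risky effect is repaired by the copy effects that follow it, and both observe that dropping the earlier occurrence of $\tilde a$ leaves every such following copy-effect set unchanged (as a set), while the earlier occurrence's own obligation is subsumed by the later one's because repair, via Corollary~\ref{cor:repairing-if-containment}, is monotone in the available copy effects. Your position-by-position bookkeeping and explicit monotonicity remark just make precise what the paper argues more informally.
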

\begin{fullversion}{
\begin{proof}
"$\Rightarrow$":
Suppose $\alpha$ satisfies $\compl Q$. Then, by Prop.\ \ref{lem:incompleteness-if-unhealed-rw-action}, all risky real-world effects of the actions in $\alpha$ are repaired. Let $a_r$ be an action in $\alpha$ that contains a risky real-world effect $r$. Thus, there must exist a set of actions $A_c$ in $\alpha$ that follows $a_r$ and contains copy effects that repair $r$. Suppose $A_c$ contains the first occurrence of $\tilde a$. Then, this first occurrence of $\tilde a$  can also replaced by the second occurrence of $\tilde a$ and then the modified set of actions also appears after $a_r$ in $\alpha'$.

"$\Leftarrow$":
Suppose $\alpha'$ satisfies $\compl Q$. Then, also $\alpha$ satisfies $\compl Q$ because adding the action $\tilde a$ earlier cannot influence query completeness: Since by assumption each risky real-world effect of the second occurrence of $\tilde a$ is repaired by some set of actions $A_c$ that follows $\tilde a$, the same set $A_c$ also repairs each risky real-world effect of the first occurrence of $\tilde a$.
\qed
\end{proof}
}\end{fullversion}
The lemma shows that our formalism can deal with cycles. While cycles imply the existence of sequences of arbitrary length, the lemma shows that we only need to consider sequences where each action occurs at most once. Intuitively, it is sufficient to check each cycle only once. Based on this lemma, we define the \emph{normal action sequence} of
a path $\pi$ as the action sequence of $\pi$ in which for all
repeated actions all but the last occurrence are removed.

\begin{proposition}
[Normal action sequences]Let $\qats=(T,\re,\ce)$ be a QATS, $\Pi$
be the set of all paths of $\qats$ and $Q$ be a query. Then
\begin{compactenum}
\item for each path $\pi\in\Pi$, its normal action sequence has at most
the length $\mid A\mid$,
\item there are at most $\Sigma_{k=1}^{\mid A\mid}\frac{\mid A\mid!}{(\mid A\mid-k)!}<(\mid \! A \! \mid+1)!$
different normal forms of paths,
\item for each path $\pi\in\Pi$, it holds that $\pi\models\compl Q$ if
its normal action sequence $\alpha'$ satisfies $\compl Q$.
\end{compactenum}
\end{proposition}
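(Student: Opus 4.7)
The plan is to dispatch the three items separately: parts (1) and (2) are immediate combinatorial consequences of the definition of a normal action sequence, while part (3) reduces to iterated application of the Duplicate Removal Lemma (Lemma~\ref{lem:repeated-transitions-can-be-ignored}).

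For part (1), by construction each action name from $A$ appears at most once in the normal action sequence $\alpha'$, since the definition deletes every occurrence of a repeated action except the last; hence $|\alpha'| \leq |A|$. For part (2), the normal action sequences are precisely the repetition-free sequences of elements of $A$. The number of such sequences of length exactly $k$ is the number of $k$-permutations of $A$, namely $|A|!/(|A|-k)!$; summing over $k = 1, \ldots, |A|$ yields the displayed sum. The strict inequality $(|A|+1)!$ follows from the standard estimate $\sum_{k=0}^{n} n!/(n-k)! = n!\sum_{k=0}^{n} 1/k! < n! \cdot e < (n+1)!$, valid for all $n \geq 1$.

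For part (3), I would induct on the number $d$ of actions that are dropped when passing from the action sequence $\alpha_\pi$ of $\pi$ to its normal form $\alpha'$. If $d = 0$, then $\alpha_\pi = \alpha'$ and the claim is trivial. If $d \geq 1$, pick an action $\tilde a$ that appears at least twice in $\alpha_\pi$ and isolate its two leftmost occurrences, obtaining a decomposition $\alpha_\pi = \alpha_1, \tilde a, \alpha_2, \tilde a, \alpha_3$. The Duplicate Removal Lemma gives $\alpha_\pi \models \compl{Q}$ iff $\alpha_1, \alpha_2, \tilde a, \alpha_3 \models \compl{Q}$, and the rewritten sequence shares the same normal form as $\alpha_\pi$ while containing one fewer duplicate. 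Applying the induction hypothesis concludes.

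The only real subtlety — and not a genuine obstacle — is ensuring that each rewriting step in part (3) truly preserves the normal form of the path: since one always targets the \emph{first two} occurrences of $\tilde a$ and collapses them into the latter, the final occurrence of each repeated action (which is the one that survives normalisation) is never touched, so the normal form stays invariant throughout the induction and the reduction to $\alpha'$ is justified.
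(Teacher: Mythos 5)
Your proof is correct and takes essentially the same route as the paper: parts (1) and (2) follow directly from the fact that normal action sequences are repetition-free, and part (3) is an iterated application of the duplicate-removal lemma (Lemma~\ref{lem:repeated-transitions-can-be-ignored}), which is exactly the paper's (one-line) justification; your explicit induction on the number of removed duplicates, with the observation that deleting a non-last occurrence never disturbs the surviving last occurrences and hence preserves the normal form, just makes that reasoning precise. One small nit: your numerical estimate $\sum_{k=0}^{n} n!/(n-k)! = n!\sum_{k=0}^{n} 1/k! < n!\,e < (n+1)!$ both starts the sum at $k=0$ (the paper's sum starts at $k=1$) and breaks down at $n=|A|=1$, where $e>2=(n+1)!/n!$; for that degenerate case the paper's inequality is simply $1<2$, checked directly, while for $|A|\ge 2$ your argument goes through.
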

The first two items hold because normal action sequences do not contain
actions twice. The third item holds because of Lemma \ref{lem:repeated-transitions-can-be-ignored},
which allows to remove all but the last occurrence of an action in
an action sequence without changing query completeness satisfaction.

Before arriving at
the main result, we need to show that deciding whether a given normal action sequence can actually be realized by a path is easy:
\begin{proposition}
Given a QATS $\qats$, a state $s$ and a normal action sequence $\alpha$.
Then, deciding whether there exists a path $\pi$ that has $\alpha$
as its normal action sequence and that ends in $s$ can be done in
polynomial time.
\label{prop:checking-for-action-sequences-path-existence}
\end{proposition}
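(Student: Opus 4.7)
My plan is to reduce the decision problem to a short sequence of standard reachability computations in label-restricted subgraphs of $T$. The first step is to give an exact structural characterization of when the action sequence $b_1,\ldots,b_n$ of a path $\pi$ has normal form $\alpha = a_1,\ldots,a_k$: this holds iff there exist positions $1 \leq p_1 < p_2 < \cdots < p_k = n$ with $b_{p_i} = a_i$, such that every position $q < p_1$ carries a label in $\{a_1,\ldots,a_k\}$ and every position $q$ with $p_i < q < p_{i+1}$ carries a label in $\{a_{i+1},\ldots,a_k\}$. The ``only if'' direction holds because $p_i$ being the last occurrence of $a_i$ forbids any $a_j$ with $j \leq i$ at a later position, and because all actions appearing in $\pi$ must survive to the normal form. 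The ``if'' direction is immediate: the constraints force $p_i$ to be the last occurrence of $a_i$ in $\pi$, so the surviving subsequence is exactly $a_1,\ldots,a_k$.

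Based on this, I would design the following algorithm. Maintain a set $R \subseteq S$ of states, initialized to $\{s_0\}$. In the preliminary phase, replace $R$ by the set of all states reachable from $R$ along edges whose labels lie in $\{a_1,\ldots,a_k\}$, and then by the set of $a_1$-successors of the resulting set. Subsequently, for $i = 2,\ldots,k$, first replace $R$ by the set of states reachable from $R$ along edges labeled in $\{a_i,\ldots,a_k\}$, and then by the set of $a_i$-successors. Finally, answer ``yes'' iff $s \in R$. Correctness follows directly from the characterization: after the $i$-th iteration, $R$ is exactly the set of states reachable from $s_0$ by a path whose action sequence, truncated up to its $i$-th $a_i$-transition, satisfies the structural conditions for positions $p_1,\ldots,p_i$; conversely, concatenating any witnessing reachability paths with the chosen single-label steps yields a path in $\qats$ ending at $s$ whose normal action sequence is exactly $\alpha$.

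For the complexity, each of the $k \leq |A|$ stages performs one reachability computation in a subgraph obtained by discarding edges with disallowed labels (polynomial in $|S|$ and $|E|$, e.g., by breadth-first search) and one single-step successor expansion (linear in $|E|$). The total running time is therefore polynomial in $|\qats|$ and $|\alpha|$. The only delicate point is the structural characterization of normal forms; once it is in place, the algorithm and its correctness proof are routine bookkeeping, and the bound on $k$ given by the preceding proposition ensures that only polynomially many stages are needed even though $\qats$ may have cycles producing paths of unbounded length.
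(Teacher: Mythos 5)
Your proof is correct and takes essentially the same route as the paper: both decompose the set of action sequences having $\alpha$ as normal form into $k$ blocks and compute, stage by stage, the states reachable from $s_0$ by label-restricted graph search, one stage per $a_i$, giving a polynomial bound. If anything, your characterization $\{a_1,\ldots,a_k\}^*a_1\{a_2,\ldots,a_k\}^*a_2\cdots\{a_k\}^*a_k$ with the forced $a_i$-steps is tighter than the paper's expression $(a_1,\ldots,a_n)^{+}(a_2,\ldots,a_n)^{+}\cdots(a_n)^{+}$, which read literally also admits sequences in which some $a_i$ never occurs; your exact version is precisely what the later nondeterministic guess-and-check argument needs.
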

\global\long\def\reach{\mathit{reach}}
The reason for this proposition is that given a normal action sequence
$\alpha=a_{1},\ldots,a_{n}$, one just needs to calculate the states
reachable from $s_{0}$ via the concatenated expression $(a_{1},\ldots,a_{n})^{+},
(a_{2},\ldots,a_{n})^{+},\ldots,(a_{n-1},a_{n})^{+},(a_{n})^{+}$.
This expression stands exactly for all action sequences with $\alpha$
as normal sequence, because it allows repeated actions before their
last occurrence in $\alpha$. Calculating the states that are reachable
via this expression can be done in polynomial time, because the reachable
states $S_{n}^{\reach}$ can be calculated iteratively for each component
$(a_{i},\ldots,a_{n})^+$ as $S_{i}^{\reach}$ from the reachable states
$S_{i-1}^{\reach}$ until the previous component $(a_{i-1},\ldots,a_{n})^+$
by taking all states that are reachable from a state in $S_{i-1}^{\reach}$
via one or several actions in $\{a_{i},\ldots,a_{n}\}$, which can
be done with a linear-time graph traversal such as breadth-first or depth-first search. Since
there are only $n$ such components, the overall algorithm works in
polynomial time.

\begin{theorem}
Given a QATS $\qats$ and a query
$Q$, both formulated in a query  language $\L$, checking ``$s\not\models\compl Q$?'' can be done using a nondeterministic polynomial-time Turing machine with a $\ucont{\L}$-oracle.
\label{thm:final-characterization-design-time}\end{theorem}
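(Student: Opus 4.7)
The plan is to exhibit a nondeterministic polynomial-time procedure that guesses a short witness for non-completeness and uses the oracle to certify it. By definition, $s\not\models\compl{Q}$ holds iff some action sequence $\alpha\in\aseq(s)$ does not satisfy $\compl{Q}$. By the Duplicate Removal Lemma (Lemma~\ref{lem:repeated-transitions-can-be-ignored}), completeness status is preserved when passing from $\alpha$ to its normal action sequence $\alpha'$, and normal sequences have length at most $|A|$. Hence it suffices to guess and verify a short certificate drawn from this bounded search space.

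The algorithm I would implement proceeds as follows. First, nondeterministically guess a sequence $\alpha'=a_1,\ldots,a_k$ of pairwise distinct actions with $k\leq|A|$. Next, use Proposition~\ref{prop:checking-for-action-sequences-path-existence} to verify in deterministic polynomial time that some path ending in $s$ has $\alpha'$ as its normal action sequence; if not, reject. Then nondeterministically guess an index $i\leq k$ together with a risky real-world effect $r$ occurring in action $a_i$. Finally, invoke the $\ucont{\L}$-oracle (or rather its complement, which a nondeterministic oracle machine can decide) to verify that $r$ is \emph{not} repaired by the copy effects of $a_{i+1},\ldots,a_k$. By Lemma~\ref{lem:incompleteness-if-unhealed-rw-action}, the existence of an unrepaired risky effect in $\alpha'$ is equivalent to $\alpha'\not\models\compl{Q}$, so the algorithm accepts exactly when $s\not\models\compl{Q}$.

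The subtle step, and the main obstacle, is how to phrase the non-repair check as a $\ucont{\L}$ query. Corollary~\ref{cor:repairing-if-containment} rephrases repairing as the containment $P_r\cap Q^R\subseteq P_{c_{i+1}}\cup\cdots\cup P_{c_m}$, but the left-hand side is an intersection with a union $Q^R=\bigcup_j Q^\pi_j$, and is therefore not an instance of $\ucont{\L}$ in its raw form. The fix is to decompose along the atom-projections: the containment holds iff $P_r\cap Q^\pi_j\subseteq P_{c_{i+1}}\cup\cdots\cup P_{c_m}$ holds for every atom-projection $Q^\pi_j$ of $Q^R$, and each such $P_r\cap Q^\pi_j$ is a single query in $\L$ obtained by concatenating the bodies of $P_r$ and $Q^\pi_j$ and equating the relevant variables. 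To witness non-repair, I would therefore additionally guess the atom-projection index $j$ and issue a single oracle call for $P_r\cap Q^\pi_j\subseteq P_{c_{i+1}}\cup\cdots\cup P_{c_m}$, rejecting if the oracle answers yes. All guesses are of polynomial size and all other checks are polynomial-time, placing the problem in $\NP^{\ucont{\L}}$ as required.
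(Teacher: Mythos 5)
Your proposal is correct and takes essentially the same route as the paper's proof: guess a normal action sequence, verify via Proposition~\ref{prop:checking-for-action-sequences-path-existence} in polynomial time that some path ending in $s$ realizes it, and then use the $\ucont{\L}$-oracle, through the characterization of Lemma~\ref{lem:incompleteness-if-unhealed-rw-action} and Corollary~\ref{cor:repairing-if-containment}, to certify that the sequence violates $\compl{Q}$. Your extra refinement---guessing the offending risky effect and the atom-projection so that a single containment call suffices, with riskiness implicitly witnessed by the failed containment---just spells out the reduction that the paper delegates to Theorem~\ref{thm:containment=complentailment}, and is a slightly more careful treatment of the intersection-with-union form of the containment.
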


\begin{proof}
If $s\not\models\compl{Q}$, one can guess a normal action sequence $\alpha$, check by Prop.~\ref{prop:checking-for-action-sequences-path-existence} in polynomial time that there exists a path $\pi$ from $s_0$ to $s$ with $\alpha$ as normal action sequence, and by Thm.~\ref{thm:containment=complentailment}  verify using the $\ucont{\L}$-oracle that $\alpha$ does not satisfy $\compl{Q}$. \qed
\end{proof}

We discuss the complexity of this problem in Section~\ref{sub:complexity}

\subsection{Runtime Verification}

Taking into account the concrete activities that were carried out
within a process can allow more conclusions about completeness. As
an example, consider that the secretary in a large school can perform
two activities regarding the enrollments, either he/she can sign enrollment
applications (which means that the enrollments become legally valid),
or he/she can record the signed enrollments that are not yet recorded
in the database. For simplicity we assume that the secretary batches
the tasks and performs only one of the activities per day. A visualization
of this process is shown in Fig. \ref{figure:runtime-verification-example-bpmn}.
Considering only the process we cannot draw any conclusions about
the completeness of the enrollment data, because if the secretary
chose the first activity, then data will be missing, however if the
secretary chose the second activity, then not. If however we have
the information that the secretary performed the second activity,
then we can conclude that the number of the currently valid enrollments
is also complete in the information system.

\begin{figure}[t]
\centering
\includegraphics[scale=1]{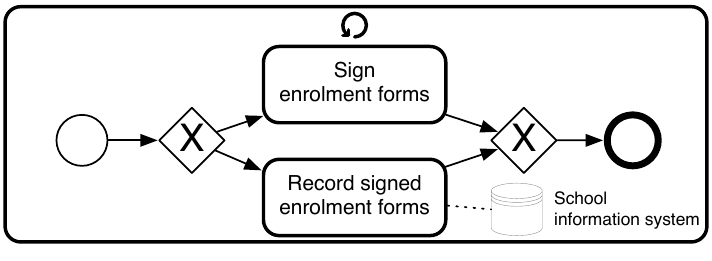}
\caption{Simplified BPMN process for the everyday activity of a
  secretary in a school}
\label{figure:runtime-verification-example-bpmn}
\end{figure}

Formally, in runtime verification we are given a path $\pi=t_{1},\ldots,t_{n}$
that was executed so far and a query $Q$. Again the problem is to
check whether completeness holds in the current state, that is, whether
all developments of $\pi$ satisfy $\compl Q$. 
\begin{corollary}
Let $\pi$ be a path in a QATS and $Q$ be a query, such that both $Q$ and the real-world effects and the copy effects in the actions of $\pi$ are formulated in a query language $\L$. Then ``$\pi\models\compl{Q}$?'' and $\ucont{\L}$ can be reduced to each other in linear time.
\end{corollary}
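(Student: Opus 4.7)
The plan is to reduce this corollary directly to the results already established for action sequences, since a single path (as opposed to a state) collapses the universal quantification over paths down to a single case. The key observation is that by the definition of path completeness given earlier, $\pi \models \compl{Q}$ holds precisely when its corresponding action sequence $\alpha_\pi$ satisfies $\compl{Q}$, so the runtime problem is, up to trivial rewriting, exactly the entailment problem $\ent{\L}$ over action sequences.

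For the ``$\Rightarrow$'' direction of the reduction, I would extract $\alpha_\pi = a_1,\ldots,a_n$ from $\pi$ in time linear in $|\pi|$ and then invoke Theorem~\ref{thm:containment=complentailment}, which shows that deciding $\alpha_\pi \models \compl{Q}$ reduces in linear time to a sequence of $\ucont{\L}$ checks (one per risky real-world effect in $\alpha_\pi$, per Lemma~\ref{lem:incompleteness-if-unhealed-rw-action} and Corollary~\ref{cor:repairing-if-containment}). The number of such checks is bounded by the number of actions in $\pi$, so the overall reduction is linear.

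For the ``$\Leftarrow$'' direction, I would reuse the construction in the second half of the proof of Theorem~\ref{thm:containment=complentailment}: given an instance $Q_0 \subseteq Q_1 \cup \ldots \cup Q_n$ of $\ucont{\L}$, build the QATS with states $s_0,s_1,s_2$ and actions $a_1,a_2$ annotated with the real-world effect $\id{R}(\tpl{x}) \leftsquigarrow Q_0(\tpl{x})$ and copy effects $Q_i(\tpl{x}) \rightarrow \av{R}(\tpl{x})$ respectively, and then simply take $\pi$ to be the unique path $s_0 \to s_1 \to s_2$. Then $\pi \models \compl{Q'}$ for $\query{Q'(\tpl{x})}{R(\tpl{x})}$ iff the containment holds, and the construction is linear in the size of the queries.

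I do not anticipate a real obstacle here: everything needed has already been done at the level of action sequences. The only thing to make explicit is the observation that runtime verification differs from design-time verification precisely in that the single observed path replaces the set $\aseq(s)$, so neither the duplicate-removal lemma nor the reachability argument of Proposition~\ref{prop:checking-for-action-sequences-path-existence} is required, and the nondeterministic guessing of Theorem~\ref{thm:final-characterization-design-time} collapses to a deterministic linear-time reduction.
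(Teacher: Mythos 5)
Your proposal is correct and follows essentially the same route as the paper: the paper derives the corollary directly from Theorem~\ref{thm:containment=complentailment} together with the observation that a path satisfies completeness if and only if its corresponding action sequence does, which is exactly your reduction in both directions (including reusing the two-action QATS construction for the converse).
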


The corollary follows directly from Theorem \ref{thm:containment=complentailment}
and the fact that a path satisfies completeness if and only if its action sequence satisfies completeness.

Runtime verification becomes more complex when also the current, concrete
state of the information system database is explicitly taken into account. Given the current state $D$ of the database,
the problem is then to check whether all the developments of $\pi$
in which $\av{D_{n}}=D$ holds satisfy $\compl Q$.
In this case repairing of all risky actions is a sufficient but not
a necessary condition for completeness:

\begin{example}
\global\long\def\resident{\mathit{resident}}
Consider a path $(s_{0},a_{1},s_{1}),(s_{1},a_{2},s_{2})$, where
action $a_{1}$ is annotated with the copy effect $\id{\req}(n,s)\rightarrow\av{\req}(n,s)$, action $a_{2}$ with the real-world effect $\id{\pupil}(n,c,s)\leftsquigarrow\id{\req}(n,s)$,
 a database $\av{D}_2$ that is empty, and consider a query $\query{Q(n)}{\pupil(n,c,s),}$
$\req(n,s)$.
Then, the query result over $\av{D}_2$ is empty. Since the relation $\req$
was copied before, and is empty now, the query result over any real-world
database must be empty too, and therefore $\compl Q$ holds. Note
that this cannot be concluded with the techniques introduced in this work, as the
real-world effect of action $a_{2}$ is risky and is not repaired.
\end{example}
The complexity of runtime verification w.r.t.~a concrete database
instance is still open.

\subsection{Dimension Analysis}

When at a certain timepoint a query is not found to be complete,
for example because the deadline for the submissions of the enrollments
from the schools to the central school administration is not yet over,
it becomes interesting to know which parts of the answer are already
complete. 
\begin{example}
Consider that on the 10th of April, the schools ``Hofer''
and ``Da Vinci'' have confirmed that they have already submitted
all their enrollments, while ``Max Valier'' and ``Gherdena'' have
entered some but not all enrollments, and other schools did not enter
any enrollments so far. Then the result of a query asking for the
number of pupils per school would look as in Fig. \ref{figure:visualization-dimension-analysis}
(left table), which does not tell anything about the trustworthiness
of the result. If one includes the information from the process, one
could highlight that the data for the former two schools is already
complete, and that there can also be additional schools in the query
result which did not submit any data so far (see right table in Fig.
\ref{figure:visualization-dimension-analysis}).\label{example:dimension-analysis}
\end{example}
\begin{figure}[t]
\begin{centering}
\includegraphics[width=.8\textwidth]{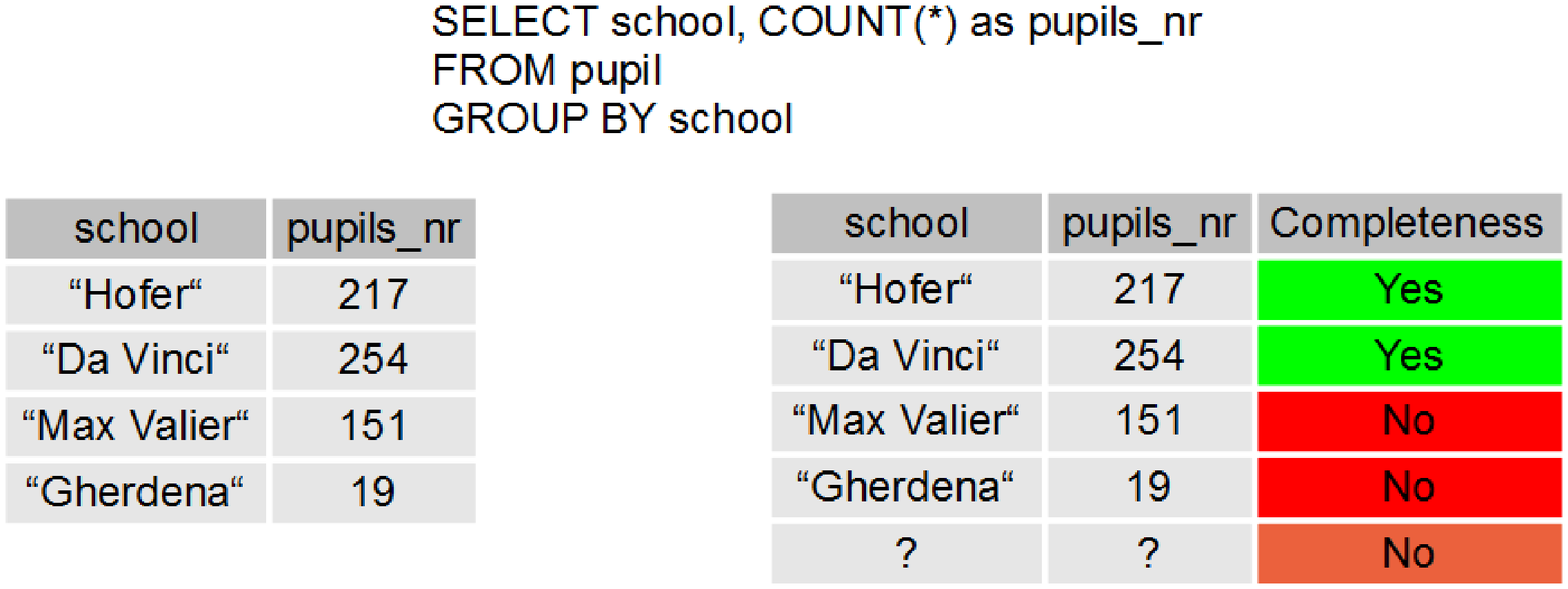}
\par\end{centering}

\caption{Visualization of the dimension analysis of Example \ref{example:dimension-analysis}.}
\label{figure:visualization-dimension-analysis}
\end{figure}

Formally, for a query $Q$ a dimension is a set of distinguished variables
of $Q$. Originally, dimension analysis was meant especially for the
arguments of a GROUP BY expression in a query, however it can also
be used with other distinguished variables of a query. Assume a query
$\query{Q(\tpl x)}{B(\tpl x,\tpl y)}$ cannot be guaranteed to be
complete in a specific state of a process. For a dimension $\tpl w\subseteq\tpl x$,
the analysis can be done as follows:
\vspace{3pt}
\begin{compactenum}
\item Calculate the result of $\query{Q'(\tpl w)}{B(\tpl x,\tpl y)}$ over
$\av D$.
\item For each tuple $\tpl c$ in $Q'(\av D)$, check whether $s,\av D\models\compl{Q[\tpl w/\tpl c]}$.
This tells whether the query is complete for the values $\tpl c$
of the dimension.
\item To check whether further values are possible, one has to guess a new value $\tpl{c}_{new}$ for the dimension and show that $Q[\tpl w/\tpl{c}_{new}]$ is not complete in the current state. For the guess one has to consider only the constants in the database plus a fixed set of new constants, hence the number of possible guesses is polynomial for a fixed dimension $\tpl v$.
\end{compactenum}
\vspace{3pt}
Step 2 corresponds to deciding for each tuple with a certain value
in $Q(\av D)$, whether it is complete or not (color red or green
in Fig. \ref{figure:visualization-dimension-analysis}, right table),
Step 3 to deciding whether there can be additional values (bottom
row in Fig. \ref{figure:visualization-dimension-analysis}, right
table).

\subsection{Complexity of Completeness Verification}

\label{sub:complexity}

In the previous sections we have seen that completeness verification can be solved using query containment.
Query containment is a problem that
has been studied extensively in database research. Basically, it is
the problem to decide, given two queries, whether the first is more
specific than the second.
The results follow from Theorem~\ref{thm:containment=complentailment}
and~\ref{thm:final-characterization-design-time}, and are summarized in Figure~\ref{figure:complexity-results-table}. We distinguish between the problem of runtime verification, which has the same complexity as query containment, and design-time verification, which, in principle requires to solve query containment exponentially often. Notable however is that in most cases the complexity of runtime verification is not higher than the one of design-time verification.
The results on linear relational and linear conjunctive queries, i.e., conjunctive queries without
selfjoins and without or with comparisons, are borrowed from \cite{Razniewski:Nutt-Compl:of:Queries-VLDB11}. 
The result on relational queries is reported
in~\cite{Sagiv:Yannakakis-Containment-VLDB}, and that on conjunctive queries from~\cite{meyden-Complexity_querying_ordered_domains-pods}.
As for integrity constraints, the result for databases satisfying finite domain constraints is reported in~\cite{Razniewski:Nutt-Compl:of:Queries-VLDB11} and for databases satisfying keys and foreign keys in~\cite{rosati:2003:containment-keys-and-foreign-keys}.

\begin{figure}[t]
\begin{centering}
\begin{scriptsize}
\begin{tabular}{|>{\centering}m{4cm}||>{\centering}m{3.5cm}|>{\centering}m{2.5cm}|}
\hline 
Query/QATS language $\L$ & Complexity of $\ucont{\L}$ and $\ent{\L}$
``$(\pi\models\compl Q)$''? & Complexity of ``$s\models\compl Q$''?\tabularnewline
\hline 
\hline 
Linear relational queries & PTIME & in $\coNP$\tabularnewline
\hline 
Linear conjunctive queries & $\coNP$-complete & $\coNP$-complete\tabularnewline
\hline 
\noalign{\vskip0.05cm}
Relational conjunctive queries & NP-complete & in $\Pi_{2}^{P}$\tabularnewline[0.05cm]
\hline 
Relational conjunctive queries over databases with finite domains & $\Pi_{2}^{P}$-complete & $\Pi_{2}^{P}$-complete\tabularnewline
\hline 
\noalign{\vskip0.05cm}
Conjunctive queries with comparisons & $\Pi_{2}^{P}$-complete & $\Pi_{2}^{P}$-complete\tabularnewline[0.05cm]
\hline 
Relational conjunctive queries over databases with keys and foreign keys & in PSPACE & in PSPACE\tabularnewline
\hline 
\end{tabular}
\end{scriptsize}
\par\end{centering}

\caption{Complexity of design-time and runtime verification for different query languages.}
\label{figure:complexity-results-table}
\end{figure}

\section{Conclusion}

In this paper we have discussed that data completeness analysis should take into account the processes that manipulate the data. In particular, we have shown how process models can be annotated with effects that create data in the real world and effects that copy data from the real world into an information system. We have then shown how one can verify the completeness of queries over transition systems that represent the execution semantics of such processes. It was shown that the problem is closely related to the problem of query containment, and that more completeness can be derived if the run of the process is taken into account.

In this work we focussed on the process execution semantics in terms
of transition systems. The next step is to realize a demonstration
system to annotate high-level business process specification languages
(such as BPMN or YAWL), extract the underlying quality-aware
transition systems, and apply the techniques here presented to check
completeness. Also, we intend to face the open question of completeness verification at runtime taking into account the actual database instance.

 \subsubsection*{Acknowledgements }

 This work was partially supported by the ESF Project 2-299-2010 \textquotedbl{}SIS
 - Wir verbinden Menschen\textquotedbl{}, and by the EU Project FP7-257593 ACSI. We are thankful to Alin Deutsch
 for a visit that helped initiating this research, and to the anonymous reviewers for helpful comments.

\bibliographystyle{splncs03}
\bibliography{completeness}
\addcontentsline{toc}{section}{completeness}


\end{document}